\numberwithin{equation}{section}
\renewcommand{\epsilon}{\varepsilon}
\newcommand{\be}{\begin{equation}}
\newcommand{\ee}{\end{equation}}
\newcommand{\C}{\mathbb{C}}
\newcommand{\R}{\mathbb{R}}
\newcommand{\T}{\mathbb{T}}
\newcommand{\Z}{\mathbb{Z}}
\newcommand{\cC}{{\mathcal C}}
\newcommand{\cE}{{\mathcal E}}
\newcommand{\cH}{{\mathcal H}}
\renewcommand{\Re}{{\ensuremath{\mathrm{Re}}}}
\renewcommand{\det}{\mathop{\mathrm{det}}}
\newtheorem{theorem}{Theorem}[section]
\newtheorem{lemma}[theorem]{Lemma}
\newtheorem{remark}[theorem]{Remark}
\newcommand{\black}{\color{black}}
\date{}
\begin{document}
\large
\title[Two-fermion lattice Hamiltonian]
{Two-fermion lattice Hamiltonian with first and second
nearest-neighboring-site interactions}

\author[S.N.Lakaev, A.K.Motovilov, and
S.Kh.Abdukhakimov]{Saidakhmat N.\,Lakaev, Alexander K.\,Motovilov,
Saidakbar Kh.\,Abdukhakimov}

 \address{Saidakhmat N.\,Lakaev, Samarkand State University, Samarkand, 140104 Uzbekistan, and
Samarkand Branch of the Romanovskii Institute of Mathematics,
Academy of Sciences of the Republic of Uzbekistan, Samarkand, 140104
Uzbekistan} \email{slakaev@mail.ru}

\address{Alexander K. Motovilov, Bogoliubov Laboratory of
Theoretical Physics, JINR, Joliot-Cu\-rie 6, 141980 Dubna, Russia,
and Dubna State University, Universitetskaya 19, 141980 Dubna,
Russia} \email{motovilv@theor.jinr.ru}

\address{Saidakbar Kh.\,Abdukhakimov, Samarkand State University, Samarkand, 140104 Uzbekistan, and
Samarkand Branch of the Romanovskii Institute of Mathematics,
Academy of Sciences of the Republic of Uzbekistan, Samarkand, 140104
Uzbekistan} \email{abduxakimov93@mail.ru}

\begin{abstract}
We study the Schr\"odinger operators ${H}_{\lambda \mu}(K)$, with
$K\in\T^2$ the fixed quasimomentum of the particles pair, associated
with a system of two identical fermions on the two-dimen\-sional
lattice $\mathbb{Z}^2$ with first and second
nearest-neighboring-site interactions of magnitudes $\lambda\in\R$
and $\mu\in\R$, respectively. We establish a partition of the
$(\lambda,\mu)-$plane so that in each its connected component, the
Schr\"{o}dinger operator ${H}_{\lambda\mu}(0)$ has a definite
(fixed) number of eigenvalues, which are situated below the bottom
of the essential spectrum and above its top. Moreover, we establish
a sharp lower bound for the number of isolated eigenvalues of
${H}_{\lambda\mu}(K)$ in each connected component.
\vspace*{-0.5cm}

\end{abstract}

\maketitle

\black
\section{\textbf{Introduction}}\label{sec:intro}

Lattice models play an important role in various branches of
physics. Among such models are the lattice few-body Hamiltonians
\cite{Mattis:1986} that may be viewed as a minimalist version of the
corresponding Bose- or Fermi-Hubbard model involving a fixed finite
number of particles of a certain type. Surely, the few-body lattice
Hamiltonians are of a great theoretic interest already in their own
right
\cite{ALzM:2004,ALKh:2012,BPL:2017,KhLA:2021,Lakaev:1993,%
LAbdukhakimov:2020,LDKh:2016,LLakaev:2017,LOzdemir:2016}. In
addition, these discrete Hamiltonians may be viewed as a natural
approximation for their continuous counterparts
\cite{FMerkuriev:1993} allowing to study few-body phenomena in the
context of the theory of bounded operators. A still intriguing
phenomenon is the celebrated Efimov effect \cite{Efimov:1970} which
is proven to take place not only in the continuous case but also in
the lattice three-body problems \cite{ALzM:2004,
ALKh:2012,DzMSh:2011,Lakaev:1993}. Furthermore, the discrete
Schr\"odinger operators represent the simplest and natural model for
description of few-body systems formed by particles traveling
through periodic structures, say, for ulracold atoms injected into
optical crystals created by the interference of counter-propagating
laser beams \cite{Bloch:2005, W-Z:2006}. The study of ultracold
few-atom systems in optical lattices became very popular in the last
years since these systems possess highly controllable parameters
such as lattice geometry and dimensionality, particle masses,
two-body potentials, temperature etc.  (see e.g., \cite{Bloch:2005,
J-Z:1998,JZoller:2005,LSAhufinger:2012} and references therein).
Unlike the traditional condensed matter systems, where stable
composite objects are usually formed by attractive forces, the
controllability of the ultracold atomic systems in an optical
lattice gives an opportunity to experimentally observe a stable
{repulsive} bound pair of ultracold atoms, see e.g.,
\cite{OOBongs:2006, W-Z:2006}. Already one-particle one-dimensional
lattice Hamiltonians are of interest in applications. For example,
in \cite{MSBelyaev:2001}, effectively a one-dimensional one-particle
lattice Hamiltonian has been employed to exhibit explicitly how an
arrangement of molecules of a certain class in lattice structures
may enhance the nuclear fusion probability.

Unlike in the continuous case, the lattice few-body system does not
admit separation of the center-of-mass motion. However, the discrete
translation invariance allows one to use the Floquet-Bloch
decomposition (see, e.g., \cite[Sec. 4]{ALMM:2006}). In particular,
the total $n$-particle lattice Hamiltonian $\mathrm{H}$ in the
(quasi)momentum representation may be written as the von Neumann
direct integral
\begin{equation}
\label{HK} \mathrm{H}\simeq\int\limits_{K\in \T^d} ^\oplus  H(K)\,d
K,
\end{equation}
where $\T^d$ is the $d$-dimensional torus and $H(K)$, the fiber
Hamiltonian acting in the respective functional Hilbert space on
$\T^{(n-1)d}$. Recall that the fiber Hamiltonians $H(K)$
nontrivially depend on the quasimomentum $K\in\T^d$ (see e.g.,
\cite{ALMM:2006, FIC:2002,Mattis:1986,Mogilner:1991}).

It is well known that the Efimov effect \cite{Efimov:1970} that  we
already mentioned before, was originally attributed to the
three-body systems moving in the three-dimensional continuous space
$\mathbb{R}^3$. \black The essence of the effect is as follows. A
system of three particles in $\mathbb{R}^3$ with pairwise attractive
short-range potentials has an infinite number of binding energies
exponentially converging to zero if the two-particle subsystems do
not have a negative spectrum and at least two of them are resonant
in the sense that any arbitrarily small negative perturbation of the
two-body interaction produces a negative spectrum. A rigorous
mathematical proof of the Efimov effect has been given in
\cite{Ovchinnikov:1979,Sobolev:1993,Tamura:1991,Yafaev:1974}. In
\cite{ALzM:2004,ALKh:2012,Lakaev:1993}, the existence of the Efimov
effect has also been proven for three-body systems on the
three-dimensional lattice $\mathbb{Z}^3$. Later on, the existence of
Efimov-type phenomena has been predicted by physicists for a
five-boson system moving on a line $\mathbb{R}$ \cite{Nishida:2010},
for a four-boson system on the plane $\mathbb{R}^2$
\cite{Nishida:2017}, and for a system of three spinless fermions
moving on the plane $\mathbb{R}^2$ \cite{Nishida:2013}. In the
latter case, a mathematical proof is available
\cite{Gridnev:2014,Tamura:2019}, and the phenomenon acquired the
name of a super Efimov effect, because of the double exponential
convergence of the binding energies to the three-body threshold. One
may guess that a similar phenomenon should take place in the system
of three spinless fermions on the two-dimensional lattice
$\mathbb{Z}^2$, at least for some values of the center-of-mass
quasimomentum. Surely, in order to prove or disprove this, one needs
first to study properties of the system of two spinless fermions on
the lattice $\mathbb{Z}^2$. In the present work, we are making the
first step on this path and study the way how new eigenvalues emerge
from the lower and/or upper thresholds of the essential (continuous)
spectrum of the fiber Hamiltonians $H(K)$ involved.

In order to obtain a more detail information, we consider an
interaction between particles that contains two terms, the one which
is only non-trivial if the particles are located in the nearest
neighboring sites of the lattice, and another one, only non-trivial
if the particles are positioned in the next to nearest neighboring
sites (see Sec.  \ref{subsection_position}, definition
\eqref{def:potentials}). These terms include real factors (coupling
constants) $\lambda$ and $\mu$, respectively, and, in the
(quasi)momentum representation, the combined interaction potential
is denoted by $V_{\lambda\mu}$. The presence in $V_{\lambda\mu}$ of
the two terms independent of each other and each depending on the
corresponding parameter $\lambda$ and $\mu$ allows the fiber
Hamiltonian $H(K)$ to have eigenvalues simultaneously below and
above the essential spectrum.

Thus, as the entries $H(K)$ in \eqref{HK}, in this work we study the
family of the fiber Hamiltonians
\begin{equation}
H_{\lambda\mu}(K):=H_0(K) + V_{\lambda\mu},\qquad
K=(K_1,K_2)\in\T^2, \label{Hlm}
\end{equation}
where $H_0(K)$ is the fiber kinetic-energy operator,
$$
\bigl(H_0(K)f\bigr)(p)=\cE_K(p)f(p), \quad p=(p_1,p_2)\in\T^2, \quad
 f\in L^{2,o}(\T^2),\black
$$
with
\begin{equation}\label{def:dispersion}
\cE_K(p):= 2 \sum_{i=1}^2\Big(1-\cos\tfrac{K_i}2\,\cos p_i\Big)
\end{equation}
The potential  $V_{\lambda\mu}$ is an integral operator on
 $L^{2,o}(\T^2)$  with a smooth kernel function explicitly given by
formula \eqref{moment_poten} below. The formula \eqref{moment_poten}
implies that for any non-zero $\lambda,\mu$ the operator
$V_{\lambda\mu}$ is rank 6. Notice that $V_{\lambda\mu}$ does not
depend on $K$ at all. Surely, the operators $H_0(K)$ and
$V_{\lambda\mu}$, $\lambda,\mu\in\R$, are both bounded and
self-adjoint. Since $V_{\lambda\mu}$ is finite rank, the essential
spectrum  of  $H_{\lambda\mu}(K)$  coincides with that of $H_0(K)$
(see Sec. \ref{subsec:ess_spec}), i.e. it  coincides with the
segment $[\cE_{\min}(K),\allowbreak\cE_{\max}(K)],$ where
$$
\cE_{\min}(K):= 2\sum\limits_{i=1}^2\Big(1-\cos \tfrac{K_{i}}2\Big),
\qquad \cE_{\max}(K):= 2\sum\limits_{i=1}^2\Big(1+\cos
\tfrac{K_{i}}2\Big).
$$

To the best of our knowledge, the Hamiltonian \eqref{Hlm} represents
a new exactly solvable model. Within this model, we will first find
both the exact number and location of eigenvalues of the edge
operator $H_{\lambda\mu}(0)$. Then, for any pair of the interaction
parameters $\lambda,\mu\in\R$, we will establish sharp lower bounds
on the numbers of isolated eigenvalues of ${H}_{\lambda\mu}(K),K\neq
0$ lying below and above the essential spectrum Theorem
\ref{teo:disc_quasiK} and \ref{teo:disc_quasi0}.

Main goal of the article is to understand the mechanism of emergence
of eigenvalues of $H_{\lambda\mu}(K)$ from the essential spectrum as
$\lambda$ and $\mu$ vary as well as to clarify the inverse process,
the absorption of eigenvalues by the essential spectrum. To achieve
this goal, we use as a technical tool, the  Fredholm determinants
\cite{Albeverio:1988,Lakaev:1989}. Namely, we consider the Fredholm
determinant $D_{\lambda\mu}(K,z)$ associated with the
Lippmann-Schwinger operator generated by the unperturbed Hamiltonian
$H_0(K)$ and perturbation $V_{\lambda\mu}$.  It is well known
\cite{Albeverio:1988} that for any fixed $K\in\T^2$, there is a
one-to-one mapping between the set of eigenvalues of the perturbed
operator $H_{\lambda\mu}(K)=H_0(K)+V_{\lambda\mu}$ and the set of
zeros of the associated determinant $D_{\lambda\mu}(K,z)$.

\tolerance 1000 We start with a study of the properties of the the
Fredholm determinant $\Delta_{\lambda\mu}(z):=D_{\lambda\mu}(0,z)$
in the edge case $K=0$. Assuming that $C^{-}(\lambda,\mu)$ (resp.
$C^{+} (\lambda,\mu)$) is the main (constant) term of the
asymptotics of the function $\Delta_{\lambda\mu}(z)$ as $z$
converges to the lower (resp. upper) threshold of the essential
spectrum, we show that an additional root of
$\Delta_{\lambda\mu}(z)$ emerges below (resp. above) the essential
spectrum of $H_{\lambda\mu}(0)$ if and only if
$C^{-}(\lambda,\mu)=0$ (resp. $C^{+} (\lambda,\mu)=0$) (see Lemmas
\ref{eigen-zeros} and \ref{lemm:asympt}). Therefore, the number of
eigenvalues  of 1$H_{\lambda\mu}(0)$  changes if and only if the
point $(\lambda,\mu)$ on the parameter plane $\R^2$ crosses one of
the {curves  $C^{-}(\lambda,\mu)=0$ or $C^{+}(\lambda,\mu)=0$} (see
also Lemmas \ref{simple1}-\ref{simple3}). Moreover, after each such
a single  crossing, the number of eigenvalues of
$H_{\lambda\mu}(0)$ changes exactly by one. Surely,   this crossing
event is interpreted as a moment when the essential spectrum of
$H_{\lambda\mu}(0)$ either gives birth to or absorbs a bound state
of $H_{\lambda\mu}(0)$(see Theorem \ref{teo:disc_quasi0}).
Furthermore, the curves $C^{-}(\lambda,\mu)=0$ and $C^{+}
(\lambda,\mu)=0$ divide the $(\lambda,\mu)$ parameter plane into
several simply connected domains on each of which the number of
eigenvalues of the operator $H_{\lambda\mu}(0)$ remains constant
(see Theorem \ref{teo:constant}).

We notice that in \cite{HzMK:2020,LKhKh:2021} similar results were
obtained for a lattice two-boson system. In that case, the
description of the partition of the parameter plane $\R^2$ into the
connected components is quite elementary. In the present, fermionic,
case the description of partition of $\R^2$  into the connected
components is much more complicated and requires a special
technique. Surprisingly, the maximum number of isolated eigenvalues
is achieved only in four connected components where both $\lambda$
and $\mu$ run through infinite intervals.

The paper is organized as follows. In Sec. \ref{section_two_fermion}
we introduce the two-particle Hamiltonian in the position and
quasimomentum representations. Sec. \ref{sec:main_results} contains
statements of our main results. In Sec. \ref{sec:auxiliary} we
present some auxiliary facts that are needed in the proofs of the
main results. These proofs themselves are presented in Sec.
\ref{sec:proofs}. For convenience of the reader, in Appendix
\ref{sec:append_A} we give the proof of Lemma~\ref{asympt(abcdef)}.

\section{\textbf{Hamiltonian of a lattice two-fermion system}}
\label{section_two_fermion}

\subsection{The two-fermion Hamiltonian in the position--space representation}
\label{subsection_position}

Let $\Z^{2}$ be the two-dimensional lattice and $\ell
^{2,a}(\Z^{2}\times\Z^{2})\subset\ell ^{2}(\Z^{2}\times\Z^{2})$, the
Hilbert space of square--summable antisymmetric functions on
$\Z^{2}\times\Z^{2}$.

In the position-space representation, the Hamiltonian
$\widehat{\mathbb{H}}_{\lambda\mu}$ associated with a system of  two
fermions with a first and second nearest-neighboring-site
interaction potential $\widehat{\mathbb{V}}_{\lambda\mu}$ is an
operator on $\ell ^{2,a}(\Z^{2}\times\Z^{2})$ of the following form:
\begin{equation}\label{two_total}
\widehat{\mathbb{H}}_{\lambda\mu}=\widehat{\mathbb{H}}_{0}+
\widehat{\mathbb{V}}_{\lambda\mu},\,\,\lambda,\mu\in\mathbb{R}.
\end{equation}
Here, $\widehat{\mathbb{H}}_{0}$ is the kinetic energy operator of
the system, defined on  $\ell ^{2,a}(\Z^{2}\times\Z^{2})$  as
\begin{equation}
[\widehat{\mathbb{H}}_{0}\hat{f}](x_1,x_2)= \sum_{s_1\in\Z^2}
\hat{\varepsilon}(x_1-s_1)\hat{f}(s_1,x_2)+\sum_{s_2\in\Z^2}
\hat{\varepsilon}(x_2-s_2)\hat{f}(x_1,s_2),\,\,\hat{f}\in\ell
^{2,a}(\Z^{2}\times\Z^{2}), \label{two_free}
\end{equation}
where
\begin{equation}\label{def:epsilon}
  {\hat{\varepsilon}(s)=}
\left\lbrace\begin{array}{ccc}
2,\quad \quad    |s|=0,\\
-\frac{1}{2},\,\, \quad  |s|=1,\\
0,\quad \quad   |s|>1,
\end{array}\right.
\end{equation}
with $|s|=|s_{1}|+|s_{2}|$ for $s=(s_{1},s_{2})\in \mathbb{Z}^2$.
The first and second nearest-neighboring-site interaction potential
$\widehat{\mathbb{V}}_{\lambda\mu}$ is the operator of
multiplication by a function $\hat{v}_{\lambda\mu}$,
\begin{equation}\label{interaction}
[\widehat{\mathbb{V}}_{\lambda\mu}\hat{f}](x_1,x_2)=
\hat{v}_{\lambda\mu}(x_1-x_2)\hat{f}(x_1,x_2),\,\,\hat{f}\in\ell
^{2,a}(\Z^{2}\times\Z^{2}),
\end{equation}%
where
\begin{equation}\label{def:potentials}
\hat{v}_{\lambda\mu}(s)= \left\lbrace\begin{array}{ccc}
 \frac {\lambda}{2}, \quad \quad |s|=1,\\
\frac {\mu}{2},\quad \quad |s|=2,\\
0, \quad \quad |s|>2.
\end{array}\right.
\end{equation}
Obviously, all the three operators $\widehat{\mathbb{H}}_{0}$,
$\widehat{\mathbb{V}}_{\lambda\mu}$, and
$\widehat{\mathbb{H}}_{\lambda\mu}$ (for $\lambda,\mu\in\R$) are
bounded and self-adjoint.

\subsection{The two-fermion Hamiltonian in the quasimomentum representation}

Let $\mathbb{T}^{2}$ be the two-dimensional torus,
$\mathbb{T}^{2}=(\mathbb{R}/2\pi\mathbb{Z)}^{2} \equiv \lbrack -\pi
,\pi )^{2}$. The torus $\mathbb{T}^{2}$ represents the Pontryagin
dual group of $\mathbb{Z}^{2}$, equipped with the  Haar measure
$\mathrm{d}p$. Let $L^{2,a}(\T^2\times\T^2)$ be the Hilbert space of
square-integrable antisymmetric functions on $\T^2\times\T^2.$

The quasimomentum-space version of the Hamiltonian \eqref{two_total}
reads as
$$
\mathbb{H}_{\lambda\mu}:=(\mathcal{F}\otimes \mathcal{F}) \hat
{\mathbb{H}}_{\lambda\mu}(\mathcal{F}\otimes \mathcal{F})^*,
$$
where $\mathcal{F}\otimes \mathcal{F}$ denotes the Fourier
transform. The operator $\mathbb{H}_{\lambda\mu}$ acts on
$L^{2,a}(\T^2\times \T^2)$ and has the form
$\mathbb{H}_{\lambda\mu}=\mathbb{H}_0 + \mathbb{V}_{\lambda\mu}$,
where $\mathbb{H}_0=(\mathcal{F}\otimes \mathcal{F}) \hat
{\mathbb{H}}_0 (\mathcal{F}\otimes \mathcal{F})^*$ is the
multiplication operator:
$$
[\mathbb{H}_0 f](p,q) = [\epsilon(p) + \epsilon(q)]f(p,q),
$$
with
$$
\epsilon(p) := \sum\limits_{i=1}^2 \big(1-\cos p_i),\quad
p=(p_1,p_2)\in \T^2,
$$
the \emph{dispersion relation} of a single fermion.
The interaction  $\mathbb{V}_{\lambda\mu}=(\mathcal{F}\otimes
\mathcal{F})\hat{\mathbb{V}}_{\lambda\mu} (\mathcal{F}\otimes
\mathcal{F})^*$ is the integral operator
$$
[\mathbb{V}_{\lambda\mu} f](p,q) = \frac{1}{(2\pi)^2}\int_{\T^2}
v_{\lambda\mu}(p-u) f(u,p+q-u)\mathrm{d} u
$$
with the kernel function
$$
v_{\lambda\mu}(p)=\lambda\sum_{i=1}^2\cos p_i+\mu\sum_{i=1}^2\cos
2p_i+2\mu\sum\limits_{i=1}^2\sum\limits_{\,i\ne j=1}^2\cos p_i\cos
p_j,\quad p=(p_1,p_2)\in \T^2.
$$

\subsection{The Floquet-Bloch decomposition of
$\mathbb{H}_{\lambda\mu}$ and discrete Schr\"odinger
operators}\label{subsec:von_neuman}

Since $\hat{\mathbb{H}}_{\lambda\mu}$ commutes with the
representation of the discrete group $\Z^2$ by shift operators on
the lattice, the space $L^{2,a}(\T^2\times\T^2)$ and
$\mathbb{H}_{\lambda\mu}$ can be decomposed into the von Neumann
direct integral as (see, e.g., \cite{ALMM:2006})
\begin{equation}\label{hilbertfiber}
L^{2,a}(\T^2\times \T^2)\simeq \int_{K\in \T^2}^\oplus \black
L^{2,o}(\T^2)\,\mathrm{d}K
\end{equation}
and
\begin{equation}\label{fiber}
\mathbb{H}_{\lambda\mu} \simeq \int_{K\in \T^2}^\oplus \black
H_{\lambda\mu}(K)\,\mathrm{d} K,
\end{equation}
where $L^{2,o}(\T^2)$ is the Hilbert space of odd functions on
$\T^2$. The fiber operator $H_{\lambda\mu}(K),$ $K\in\T^2$, in
\eqref{fiber} acting on $L^{2,o}(\T^2)$  is of the form
\begin{equation}\label{momentum}
H_{\lambda\mu}(K) := H_0(K) + V_{\lambda\mu},
\end{equation}
where the (unperturbed) operator $H_0(K)$ is the multiplication
operator  by the function \eqref{def:dispersion} and the
perturbation operator  $V_{\lambda\mu}$ is given by
\begin{align}\label{moment_poten}
[V_{\lambda\mu}f](s)=&\frac{\lambda}{(2\pi)^2}\sum\limits_{i=1}^2\sin
s_i\int\limits_{\mathbb{T}^2} \sin t_i
f(t)\mathrm{d}t+\frac{\mu}{(2\pi)^2}\sum\limits_{i=1}^2\sin 2s_i
\int\limits_{\mathbb{T}^2}\sin 2t_if(t)\mathrm{d}t\\
&+\frac{\mu}{2\pi^2}\sum\limits_{i=1}^2\sum\limits_{\,i\ne
j=1}^2\sin s_i \cos s_j\int\limits_{\mathbb{T}^2}\sin t_i \cos t_j
f(t)\mathrm{d}t.\nonumber
\end{align}
Obviously, both the operators $H_0(K)$ and $V_{\lambda\mu}$ are
bounded and self-adjoint. In the literature, the parameter
$K\in\T^2$ is called the  \emph{two-particle quasimomentum} and the
entry $H_{\lambda\mu}(K)$ is called the \emph{discrete Schr\"odinger
operator} associated to the two-particle Hamiltonian
$\hat{\mathbb{H}}_{\lambda\mu}.$

\subsection{The essential spectrum of discrete Schr\"odinger operators}
\label{subsec:ess_spec}

Depending on $\lambda, \mu \in \R$, the rank of $V_{\lambda\mu}$
varies but never exceeds six. Hence, by Weyl's theorem,  for any
$K\in\T^2$ the essential spectrum
$\sigma_{\mathrm{ess}}(H_{\lambda\mu}(K))$  of $H_{\lambda\mu}(K)$
coincides with the spectrum of $H_0(K),$ i.e.,
\begin{equation}\label{eq:essential_spectrum}
\sigma_{\mathrm{ess}}(H_{\lambda\mu}(K))=\sigma(H_0(K)) =
[\cE_{\min}(K),\cE_{\max}(K)],
\end{equation}
with
\begin{align*}
\cE_{\min}(K):= & \min_{p\in  \T ^2}\,\cE_K(p) =
2\sum\limits_{i=1}^2\Big(1-\cos \tfrac{K_{i}}2\Big)\geq \cE_{\min}(0)=0,\\
\cE_{\max}(K):= & \max_{p\in  \T ^2}\,\cE_K(p)=
2\sum\limits_{i=1}^2\Big(1+\cos \tfrac{K_{i}}2\Big)\leq
\cE_{\max}(0)=8,
\end{align*}
where
\begin{equation}\label{dispersion}
\cE_K(p):= 2\sum_{i=1}^2\Big(1-\cos\tfrac{K_i}2\,\cos p_i\Big).
\end{equation}

\section{Main results}\label{sec:main_results}
Our first  main result is the following generalization of Theorems 1
and 2 in \cite{ALMM:2006}.

\begin{theorem}\label{teo:discr_Kne0}
Suppose that ${H}_{\lambda\mu}(0)$ has $n$ eigenvalues below (resp.
above) the essential spectrum for some $\lambda,\mu\in\R.$ Then for
each $K\in\T^2$ the operator ${H}_{\lambda\mu}(K)$ has at least $n$
eigenvalues below (resp. above) its essential  spectrum.
\end{theorem}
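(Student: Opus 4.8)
The plan is to reduce the statement to a single operator inequality between suitably shifted fiber Hamiltonians and then invoke the min-max principle. Write $n^{-}(K)$ (resp.\ $n^{+}(K)$) for the number of eigenvalues of $H_{\lambda\mu}(K)$, counted with multiplicity, lying strictly below $\cE_{\min}(K)$ (resp.\ strictly above $\cE_{\max}(K)$); by \eqref{eq:essential_spectrum} these are precisely the eigenvalues lying outside the essential spectrum. I would prove $n^{-}(K)\ge n^{-}(0)$ and $n^{+}(K)\ge n^{+}(0)$ separately, the two arguments being mirror images of each other.

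For the lower edge, the decisive observation is the pointwise monotonicity of the dispersion relative to its minimum. Using \eqref{dispersion} and the formula for $\cE_{\min}(K)$ one computes
$$
\cE_K(p)-\cE_{\min}(K)=2\sum_{i=1}^2\cos\tfrac{K_i}2\,(1-\cos p_i),
$$
and, since $K_i\in[-\pi,\pi)$ forces $\cos\tfrac{K_i}2\in[0,1]$, this is bounded above termwise by $2\sum_{i=1}^2(1-\cos p_i)=\cE_0(p)-\cE_{\min}(0)$. Hence, as multiplication (thus bounded self-adjoint) operators on $L^{2,o}(\T^2)$,
$$
H_0(K)-\cE_{\min}(K)\le H_0(0)-\cE_{\min}(0).
$$
Adding the $K$-independent perturbation $V_{\lambda\mu}$ to both sides gives the form inequality $H_{\lambda\mu}(K)-\cE_{\min}(K)\le H_{\lambda\mu}(0)-\cE_{\min}(0)$.

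Now I would apply the variational characterization. Both shifted operators have essential spectrum whose bottom is exactly $0$, so for either of them the number of eigenvalues below $0$ equals the supremum of $\dim L$ over subspaces $L\subset L^{2,o}(\T^2)$ on which the quadratic form is strictly negative. If $L$ realizes $n^{-}(0)$ for $H_{\lambda\mu}(0)-\cE_{\min}(0)$, then for every nonzero $u\in L$ the form inequality yields $\langle(H_{\lambda\mu}(K)-\cE_{\min}(K))u,u\rangle\le\langle(H_{\lambda\mu}(0)-\cE_{\min}(0))u,u\rangle<0$, so the same $L$ is admissible for $H_{\lambda\mu}(K)-\cE_{\min}(K)$; taking the supremum gives $n^{-}(K)\ge n^{-}(0)$. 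The upper edge is handled identically after replacing the minimum by the maximum: one checks $\cE_{\max}(K)-\cE_K(p)=2\sum_{i=1}^2\cos\tfrac{K_i}2\,(1+\cos p_i)\le\cE_{\max}(0)-\cE_0(p)$, whence $\cE_{\max}(K)-H_{\lambda\mu}(K)\le\cE_{\max}(0)-H_{\lambda\mu}(0)$, and counting the negative eigenvalues of these two operators (whose essential spectra again start at $0$) reproduces $n^{+}(K)\ge n^{+}(0)$.

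The routine parts are the two trigonometric identities and the bookkeeping in the min-max principle. The one point that needs genuine care — and which I expect to be the main obstacle — is the justification that the counting function ``number of eigenvalues below $0$'' really admits the clean variational description used above: one must verify that $0$ is the bottom of the essential spectrum of each shifted operator (so that no continuous spectrum contaminates the count) and that the competing subspaces may be taken inside the form domain, which here causes no trouble since that domain is all of $L^{2,o}(\T^2)$, every operator in sight being bounded. Once this is in place, the monotonicity of eigenvalue counts under the form inequality is immediate.
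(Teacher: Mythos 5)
Your proposal is correct and takes essentially the same route as the paper: the paper's Lemma \ref{lem:monoton_xos_qiymat} rests on exactly your pointwise inequality $\cos\tfrac{K_i}{2}\,(1\mp\cos p_i)\le 1\mp\cos p_i$, turned into a form inequality for the shifted operators, and its proof of Theorem \ref{teo:discr_Kne0} then applies the min-max principle via the shifted quantities $e_n(K;\lambda,\mu)-\cE_{\min}(K)$ and $E_n(K;\lambda,\mu)-\cE_{\max}(K)$. The only difference is bookkeeping: the paper states coordinate-wise monotonicity in each $K_i$ and works with min-max values, whereas you compare $K$ directly with $0$ and count negative-definite subspaces, which is an equivalent variational device since all operators involved are bounded.
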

Denote by $\mu^{\pm}_{0}$ and $\mu^{\pm}_{1}$ the following numbers:
\begin{align}\label{root01}
&\mu^{\pm}_{0}=\frac{88-30\pi\pm
\sqrt{1044\pi^2-6720\pi+10816}}{240\pi-24\pi^2-512}\pi,
\end{align}
and
\begin{align}
\label{root02} &\mu^{\pm}_{1}=\frac{128-16\pi-9\pi^2\pm\sqrt{
225\pi^4-1440\pi^3+3904\pi^2-10240\pi+16384}}{120\pi-12\pi^2-256}.
\end{align}

Note that the numerical values of $\mu^{\pm}_{0}$ and
$\mu^{\pm}_{1}$ are as follows:
$$
\mu_0^-=-5.6172...,\quad \mu_0^+=-2.0623..., \quad
\mu_1^-=-5.7523..., \quad \mu_1^+=-2.9272...,
$$
and, hence, these numbers satisfy the relations
\begin{align}\label{roots}
&\mu^{-}_{1}<\mu_{0}^{-} <\mu^{+}_{1}<\mu^{+}_{0}<0.
\end{align}

By using the numbers  $\mu_0^+,\mu_0^-$  and $\mu_1^+,\mu_1^-$
defined, respectively, in \eqref{root01} and \eqref{root02} we
introduce the following two functions on $\mathbb{R}^2$:
\begin{align}
\label{abobel}
&C^{\pm}(\lambda,\mu)=\frac{30\pi-3\pi^2-64}{6\pi^2}[8(\mu\pm\mu_0^+)(\mu\pm\mu_0^-)
\mp\lambda(\mu\pm\mu_1^+)(\mu\pm\mu_1^-)].
\end{align}
Obviously, the curves on $(\lambda,\mu)$-plane defined by the
equations $C^{\pm}(\lambda,\mu)= 0$ coincide with the graphs of the
respective functions
\begin{equation}\label{curves}
\lambda(\mu)= \pm\frac{8(\mu\pm\mu_{0}^+)(\mu\pm\mu_{0}^-)}
{(\mu\pm\mu_1^+)(\mu\pm\mu_1^-)}.\\
\end{equation}
Any of the two functions \eqref{curves} is differentiable on its
domain. The graph of each of them consists of three separate smooth
curves with the respective asymptotes $\mu=\mp\mu_1^+$ and
$\mu=\mp\mu_1^-$. In each case these separate curves divide the
plane $\mathbb{R}^2$ into the four non-overlapping connected
components (see Figs. \ref{fig:1} and \ref{fig:dynamics2})
\begin{align*}
\cC_{0}^{-}&=\{(\lambda,\mu)\in\R^2:
\lambda>-\frac{8(\mu-\mu_{0}^+)(\mu-\mu_{0}^-)}
{(\mu-\mu_1^+)(\mu-\mu_1^-)},\,\,\mu>\mu_1^+\},\\
\cC_{1}^{-}&=\{(\lambda,\mu)\in\R^2:\,\,\lambda<-\frac{8(\mu-\mu_{0}^+)(\mu-\mu_{0}^-)}
{(\mu-\mu_1^+)(\mu-\mu_1^-)},\,\,
\mu>\mu_1^+\}\\
&\quad\quad\cup\{(\lambda,\mu)\in\R^2:\,\, \lambda\in\R,\,\,\mu=\mu_1^+\}\\
&\quad\quad\cup\{(\lambda,\mu)\in\R^2:\,\,\lambda>-\frac{8(\mu-\mu_{0}^+)(\mu-\mu_{0}^-)}
{(\mu-\mu_1^+)(\mu-\mu_1^-)},\,\,
\mu_1^-<\mu<\mu_1^+\},\\
\cC_{2}^{-}&=\{(\lambda,\mu)\in\R^2:\,\lambda<-\frac{8(\mu-\mu_{0}^+)(\mu-\mu_{0}^-)}
{(\mu-\mu_1^+)(\mu-\mu_1^-)}
,\,\,\,\,\mu_1^-<\mu<\mu_1^+\}\\
&\quad\quad\cup\{(\lambda,\mu)\in\R^2:\,\, \lambda\in\R,\,\,\mu=\mu_1^-\}\\
&\quad\quad\cup\{(\lambda,\mu)\in\R^2:\,\lambda>-\frac{8(\mu-\mu_{0}^+)(\mu-\mu_{0}^-)}
{(\mu-\mu_1^+)(\mu-\mu_1^-)}
,\,\,\,\,\mu< \mu_1^-\},\\
\cC_{3}^{-}&=\{(\lambda,\mu)\in\R^2:\,\lambda<-\frac{8(\mu-\mu_{0}^+)(\mu-\mu_{0}^-)}
{(\mu-\mu_1^+)(\mu-\mu_1^-)},\,\,\,\,\mu<\mu_1^-\}
\end{align*}
and
\begin{align*}
\cC_{0}^{+}&=\{(\lambda,\mu)\in\R^2:
\lambda<\frac{8(\mu+\mu_{0}^+)(\mu+\mu_{0}^-)}
{(\mu+\mu_1^+)(\mu+\mu_1^-)},\,\,\mu<-\mu_1^+\},\\
\cC_{1}^{+}&=\{(\lambda,\mu)\in\R^2:\,\,\lambda>\frac{8(\mu+\mu_{0}^+)(\mu+\mu_{0}^-)}
{(\mu+\mu_1^+)(\mu+\mu_1^-)},\,\,
\mu<-\mu_1^+\}\\
&\quad\quad\cup\{(\lambda,\mu)\in\R^2:\,\, \lambda\in\R,\,\,\mu=-\mu_1^+\}\\
&\quad\quad\cup\{(\lambda,\mu)\in\R^2:\,\,\lambda<\frac{8(\mu+\mu_{0}^+)(\mu+\mu_{0}^-)}
{(\mu+\mu_1^+)(\mu+\mu_1^-)},\,\,-\mu_1^+<\mu<-\mu_1^-\},\\
\cC_{2}^{+}&=\{(\lambda,\mu)\in\R^2:\,\,\lambda>\frac{8(\mu+\mu_{0}^+)(\mu+\mu_{0}^-)}
{(\mu+\mu_1^+)(\mu+\mu_1^-)},\,\,-\mu_1^+<\mu<-\mu_1^-\}\\
&\quad\quad\cup\{(\lambda,\mu)\in\R^2:\,\, \lambda\in\R,\,\,\mu=-\mu_1^-\}\\
&\quad\quad\cup\{(\lambda,\mu)\in\R^2:\,\lambda<\frac{8(\mu+\mu_{0}^+)(\mu+\mu_{0}^-)}
{(\mu+\mu_1^+)(\mu+\mu_1^-)}
,\,\,\,\,\mu> -\mu_1^- \},\\
\cC_{3}^{+}&=\{(\lambda,\mu)\in\R^2:\,\lambda>\frac{8(\mu+\mu_{0}^+)(\mu+\mu_{0}^-)}
{(\mu+\mu_1^+)(\mu+\mu_1^-)},\,\,\,\,\mu>-\mu_1^-\}.
\end{align*}

\begin{figure}[h!]
    \centering
\includegraphics[width=0.9\textwidth]{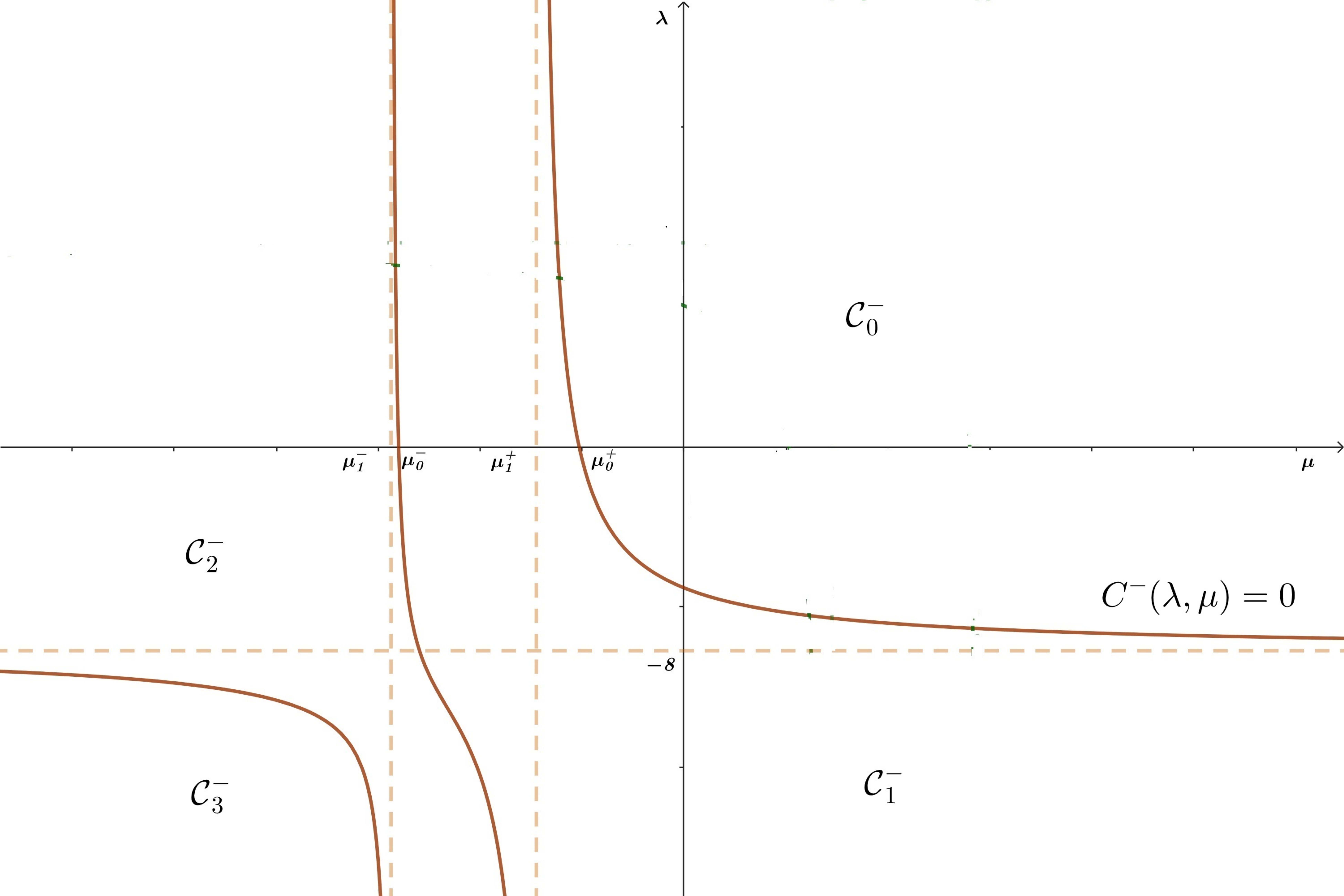}
\caption{Plot of the curves defined by equation
$C^-(\lambda,\mu)=0$}\label{fig:1}
\end{figure}
\begin{figure}[h!]
    \centering
\includegraphics[width=0.9\textwidth]{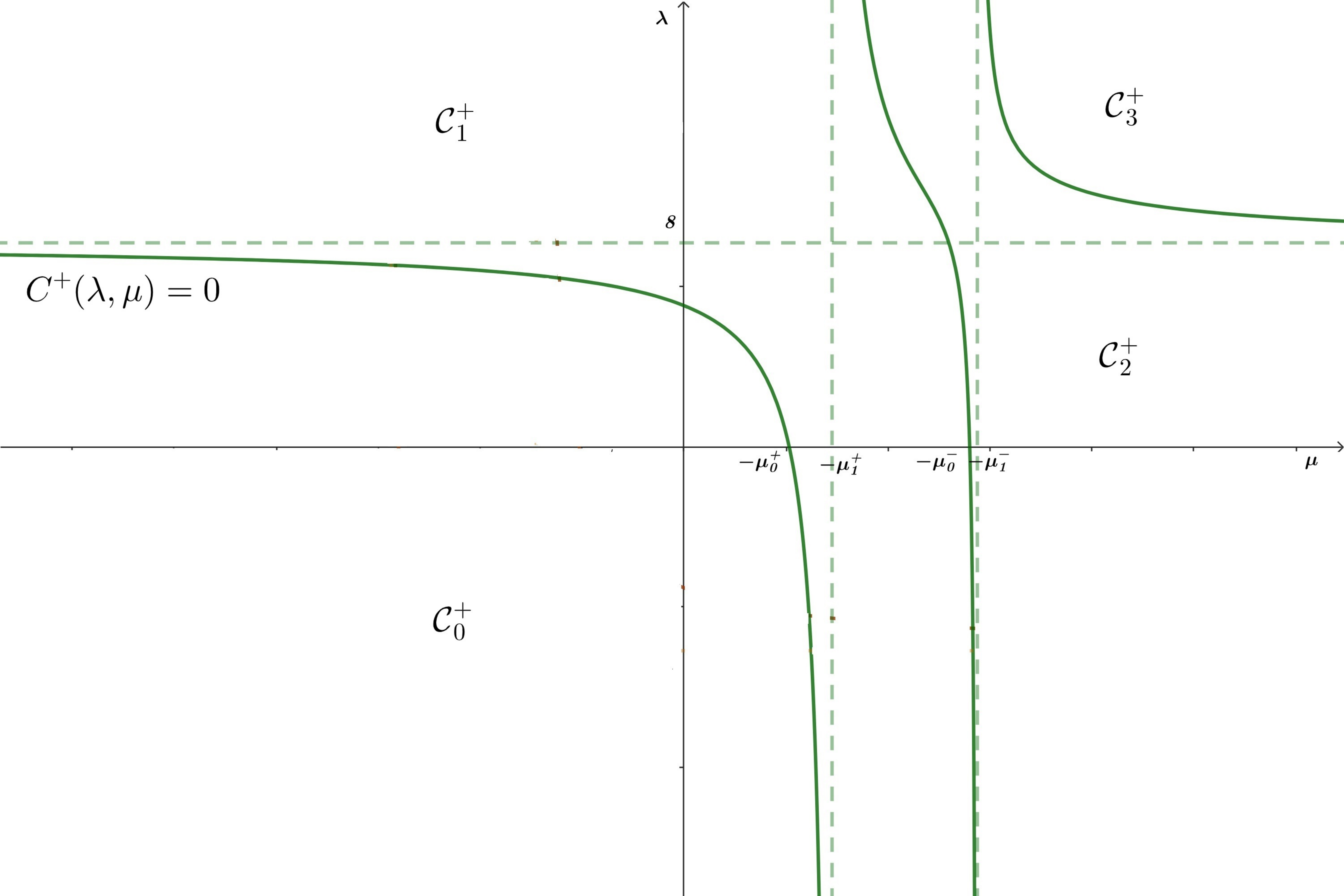}
    \caption{Plot of the curves
defined by the equation $C^+(\lambda,\mu)=0$}
    \label{fig:dynamics2}
\end{figure}

It turns out that in each of the above components $\cC^{-}_{k}$, the
number of eigenvalues of the operator $H_{\lambda\mu}(0)$, lying
below its essential spectrum, remains constant. In a similar way,
any of the components $\cC^{+}_{k}$ is a domain where the number of
eigenvalues of $H_{\lambda\mu}(0)$, lying above the essential
spectrum \eqref{eq:essential_spectrum}, does not vary. Both these
facts are established in the following theorem.

\begin{theorem}\label{teo:constant}
Let $\cC^-$ be one of the above connected components $\cC^-_k$,
$k=0,1,2,3$, of the partition of the $(\lambda,\mu)$-plane. Then for
any $(\lambda,\mu)\in\cC^-$ the number $n_-({H}_{\lambda\mu}(0))$ of
eigenvalues of $H_{\lambda\mu}(0)$ lying below the essential
spectrum $\sigma_\mathrm{ess}\bigl(H_{\lambda\mu}(0)\bigr)$ remains
constant. Analogously, let $\cC^+$ be one of the above connected
components $\cC^+_k$, $k=0,1,2,3$, of the partition of the
$(\lambda,\mu)$-plane. Then for any $(\lambda,\mu)\in\cC^+$ the
number $n_+({H}_{\lambda\mu}(0))$ of eigenvalues of
$H_{\lambda\mu}(0)$ lying above
$\sigma_\mathrm{ess}\bigl(H_{\lambda\mu}(0)\bigr)$ remains constant.
\end{theorem}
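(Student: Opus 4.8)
The plan is to convert the statement into a local-constancy assertion for the number of zeros of the Fredholm determinant and then to invoke the connectedness of the components. First I would use Lemma~\ref{eigen-zeros} to replace the eigenvalue count by a zero count: for every fixed $(\lambda,\mu)$ the number $n_-\bigl(H_{\lambda\mu}(0)\bigr)$ of eigenvalues below the essential spectrum equals the number of zeros of $\Delta_{\lambda\mu}(z)=D_{\lambda\mu}(0,z)$ in $(-\infty,\cE_{\min}(0))=(-\infty,0)$, and likewise $n_+$ equals the number of zeros in $(\cE_{\max}(0),+\infty)=(8,+\infty)$. Since $V_{\lambda\mu}$ is of finite rank and its kernel~\eqref{moment_poten} depends polynomially on $(\lambda,\mu)$, the map $(\lambda,\mu,z)\mapsto\Delta_{\lambda\mu}(z)$ is real-analytic on the complement of the essential spectrum, so the zeros of $\Delta_{\lambda\mu}$ move continuously as $(\lambda,\mu)$ varies.

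Next I would show that $n_\mp$ can change only across the curves $C^\mp=0$. Because $H_{\lambda\mu}(0)$ is bounded, with operator norm locally bounded in $(\lambda,\mu)$, every eigenvalue below (resp. above) the essential spectrum stays in a fixed bounded interval on any compact parameter set; hence no zero of $\Delta_{\lambda\mu}$ escapes to $-\infty$ (resp. $+\infty$), and a zero can enter or leave $(-\infty,0)$ (resp. $(8,+\infty)$) only by reaching the threshold $z=0$ (resp. $z=8$). By Lemma~\ref{lemm:asympt}, the quantities $C^-(\lambda,\mu)$ and $C^+(\lambda,\mu)$ are exactly the constant terms in the threshold asymptotics of $\Delta_{\lambda\mu}$, and such a zero is born at or absorbed by the lower (resp. upper) threshold precisely when $C^-(\lambda,\mu)=0$ (resp. $C^+(\lambda,\mu)=0$); Lemmas~\ref{simple1}--\ref{simple3} guarantee in addition that each such crossing is transversal and alters the count by exactly one. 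Consequently $n_-$ is locally constant on $\R^2\setminus\{C^-=0\}$ and $n_+$ is locally constant on $\R^2\setminus\{C^+=0\}$.

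It then remains to pass from local to global constancy. By the explicit description preceding the theorem, each $\cC^-_k$ is, by construction, one of the connected components into which the three branches of the graph~\eqref{curves} (with vertical asymptotes at $\mu=\mu_1^\pm$) partition $\R^2\setminus\{C^-=0\}$, and analogously for $\cC^+_k$. A function that is locally constant on a connected set is constant there; applying this to $n_-$ on $\cC^-_k$ and to $n_+$ on $\cC^+_k$ yields the claim.

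I expect the principal difficulty to lie in the second step, namely in excluding every change of $n_\mp$ other than a genuine threshold crossing. One must control the limit $z\to 0^-$ (resp. $z\to 8^+$), where $\Delta_{\lambda\mu}$ carries a threshold singularity, and this is precisely what the asymptotic expansion underlying $C^\pm$ in Lemma~\ref{lemm:asympt} supplies; one must also note that, for a self-adjoint finite-rank problem, any collision of two zeros inside the half-line leaves the total count unchanged, so only threshold events matter. A secondary, purely geometric point is to confirm that the sets listed above really are the connected components of the complement of the curve---in particular that the vertical lines $\mu=\mu_1^\pm$, where~\eqref{curves} has poles, lie in the interior of a single component and not on the curve---which follows directly from the pole structure of~\eqref{curves}.
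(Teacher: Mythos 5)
Your proposal is correct and follows essentially the same route as the paper's own proof: reduce the eigenvalue count to counting zeros of $\Delta_{\lambda\mu}$ via Lemma~\ref{eigen-zeros}, obtain local constancy of the zero count from joint real-analyticity in $(\lambda,\mu,z)$ together with the nonvanishing of the threshold constants $C^{\pm}$ on each component (Lemma~\ref{lemm:asympt}); the paper implements this step with Rouch\'e's theorem and the limits $\Delta_{\lambda\mu}(z)\to 1$ at infinity rather than your operator-norm confinement, and then globalize exactly as you do by connectedness (compact Jordan curves). The only slip is peripheral: Lemmas~\ref{simple1}--\ref{simple3} do not assert that crossings of $\{C^{\mp}=0\}$ are transversal or change the count by exactly one, but your argument for this theorem never actually needs that claim.
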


The result below concerns the number of eigenvalues of the fiber
Hamiltonian $H_{\lambda\mu}(K)$ for various $K$ and $(\lambda,\mu)$.

\begin{theorem}\label{teo:disc_quasiK}
Let $K\in\T^2$ and $(\lambda,\mu)\in\R^2$. Then for the numbers
$n_+({H}_{\lambda\mu}(K))$ and $n_-({H}_{\lambda\mu}(K))$ of
eigenvalues  of the operator $H_{\lambda\mu}(K)$ lying,
respectively, above and below its essential spectrum
$\sigma_\mathrm{ess}\bigl(H_{\lambda\mu}(K)\bigr)$, the following
two series of implications hold:
\begin{equation}\label{above_eigenK}
\begin{aligned}
& (\lambda,\mu)\in \cC_3^+ \cap \cC_0^- & 
\Longrightarrow \qquad  n_+({H}_{\lambda\mu}(K)) =6, \\
& (\lambda,\mu)\in \cC_2^+ \cap \cC_0^- \,\mathrm{or}
\,(\lambda,\mu)\in \cC_2^+ \cap \cC_1^-  & \hspace*{-14mm}
\Longrightarrow \qquad  n_+({H}_{\lambda\mu}(K)) \ge4,\\
& (\lambda,\mu)\in \cC_1^+ \cap \cC_0^-\,\mathrm{or}
\,(\lambda,\mu)\in \cC_1^+ \cap \cC_1^- &\hspace*{-14mm}
\Longrightarrow \qquad    n_+({H}_{\lambda\mu}(K)) \ge2,\\
& (\lambda,\mu)\in \overline{\cC_0^+}  &  \hspace*{-8mm}
\Longrightarrow \qquad  n_+({H}_{\lambda\mu}(K))\geq 0,
\end{aligned}
\end{equation}
and
\begin{equation}\label{below_eigenK}
\begin{aligned}
& (\lambda,\mu)\in \cC_3^- \cap \cC_0^+ &
\Longrightarrow \qquad n_-({H}_{\lambda\mu}(K)) =6, \\
& (\lambda,\mu)\in \cC_2^- \cap \cC_0^+\,\mathrm{or}
\,(\lambda,\mu)\in \cC_2^- \cap \cC_1^+  & \hspace*{-14mm}
\Longrightarrow \qquad n_-({H}_{\lambda\mu}(K)) \ge4,\\
& (\lambda,\mu)\in \cC_1^- \cap \cC_0^+\,\mathrm{or}
\,(\lambda,\mu)\in \cC_1^- \cap \cC_1^+  & \hspace*{-14mm}
\Longrightarrow \qquad n_-({H}_{\lambda\mu}(K)) \ge2,\\
& (\lambda,\mu)\in \overline{\cC_0^-} & \hspace*{-8mm}
\Longrightarrow \qquad n_-({H}_{\lambda\mu}(K))\geq0,
\end{aligned}
\end{equation}
where  $\overline{\mathcal{A}}$ is the closure of the set
$\mathcal{A}$.
\end{theorem}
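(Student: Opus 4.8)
The unitary shift $U$ on $L^{2,o}(\T^2)$ defined by $(Uf)(p)=f(p+(\pi,\pi))$ satisfies $U H_0(K)U^*=8-H_0(K)$ (because $\cE_{\min}(K)+\cE_{\max}(K)=8$ for every $K$) and leaves $V_{\lambda\mu}$ invariant; hence $U H_{\lambda\mu}(K)U^*=8-H_{-\lambda,-\mu}(K)$ and consequently $n_+(H_{\lambda\mu}(K))=n_-(H_{-\lambda,-\mu}(K))$ for all $K\in\T^2$. Since the reflection $(\lambda,\mu)\mapsto(-\lambda,-\mu)$ sends $\cC_k^+$ onto $\cC_k^-$ (indeed one checks $C^+(\lambda,\mu)=C^-(-\lambda,-\mu)$ directly from \eqref{abobel}), the display \eqref{below_eigenK} is exactly the image of \eqref{above_eigenK} under this reflection. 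It therefore suffices to prove \eqref{above_eigenK}.

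Next I would determine the exact counts $n_+(H_{\lambda\mu}(0))$ on each component. By Theorem \ref{teo:constant} this number is constant on $\cC_k^+$, so it is enough to evaluate it at one representative point of each $\cC_k^+$. I would do so through the Fredholm determinant $\Delta_{\lambda\mu}(z)=D_{\lambda\mu}(0,z)$, whose real zeros above $\cE_{\max}(0)=8$ are precisely the eigenvalues counted by $n_+$, using the threshold criterion of Lemmas \ref{eigen-zeros} and \ref{lemm:asympt} that a zero is created or absorbed at the upper edge exactly when the leading constant $C^+(\lambda,\mu)$ vanishes. Exploiting the $p_1\leftrightarrow p_2$ invariance of $\cE_0$ and of $V_{\lambda\mu}$, which block-diagonalizes $\Delta_{\lambda\mu}$ into two rank-three factors, I would show that the relevant zeros appear in pairs, so that $n_+(H_{\lambda\mu}(0))$ assumes the even values $0,2,4,6$ on $\cC_0^+,\cC_1^+,\cC_2^+,\cC_3^+$, respectively.

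Granting these values, the lower bounds in \eqref{above_eigenK} are immediate from Theorem \ref{teo:discr_Kne0}: for every $K$ one has $n_+(H_{\lambda\mu}(K))\ge n_+(H_{\lambda\mu}(0))$, which yields $\ge6,\ge4,\ge2,\ge0$ on the four families of components. To promote the first line to the equality $n_+(H_{\lambda\mu}(K))=6$, I would add the complementary rank cap. As $H_0(K)$ is multiplication by $\cE_K$ with $\sigma(H_0(K))=[\cE_{\min}(K),\cE_{\max}(K)]$ and $V_{\lambda\mu}$ is self-adjoint of rank at most six, a min--max argument shows that on the span $\mathcal M$ of any family of eigenvectors of $H_{\lambda\mu}(K)$ with eigenvalues above $\cE_{\max}(K)$ one has $\langle V_{\lambda\mu}\psi,\psi\rangle>0$ for every nonzero $\psi\in\mathcal M$; hence $\dim\mathcal M$ is bounded by the number of positive eigenvalues of $V_{\lambda\mu}$, giving $n_+(H_{\lambda\mu}(K))\le\Rank V_{\lambda\mu}\le6$. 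Combined with $n_+(H_{\lambda\mu}(K))\ge6$ this forces $n_+(H_{\lambda\mu}(K))=6$ on $\cC_3^+\cap\cC_0^-$. Applying the same estimate above and below simultaneously gives $n_+(H_{\lambda\mu}(0))+n_-(H_{\lambda\mu}(0))\le6$, which is what makes the listed intersections exhaustive: for instance $\cC_2^+\cap\cC_2^-$ is empty since it would require $4+4>6$, so $\cC_2^+\subseteq\cC_0^-\cup\cC_1^-$ and $\cC_3^+\subseteq\cC_0^-$.

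The crux is the second step---upgrading the mere constancy of Theorem \ref{teo:constant} to the exact values $2k$. This needs an explicit count of the real zeros of $\Delta_{\lambda\mu}$ beyond the upper threshold at a representative of each $\cC_k^+$, together with the verification that the symmetry-induced factorization forces these zeros to appear two at a time. By comparison, the reflection symmetry of the first step, the monotonicity of Theorem \ref{teo:discr_Kne0}, and the rank/min--max cap are all routine once the $K=0$ picture is established.
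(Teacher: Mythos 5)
Your proposal is correct, and its logical skeleton is the same as the paper's (very terse) proof: exact eigenvalue counts for $H_{\lambda\mu}(0)$ on each component, transferred to arbitrary $K$ by the monotonicity of Theorem \ref{teo:discr_Kne0}, with the rank bound $\Rank V_{\lambda\mu}\le 6$ capping the count in the $=6$ case. Three remarks on the differences. First, what you call the ``crux'' --- recounting the zeros of $\Delta_{\lambda\mu}$ beyond the thresholds on a representative of each component --- is exactly the content of Theorem \ref{teo:disc_quasi0}, which the paper proves separately and which you are entitled to cite; there is no need to re-derive it, and your sketch of how you would (Lemmas \ref{eigen-zeros} and \ref{lemm:asympt}, the $\mathrm{os}\oplus\mathrm{oa}$ decomposition forcing multiplicity two, Theorem \ref{teo:constant}) is in any case the paper's own method. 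Second, your reduction of \eqref{below_eigenK} to \eqref{above_eigenK} via the shift $(Uf)(p)=f(p+(\pi,\pi))$, which gives $U H_{\lambda\mu}(K)U^{*}=8-H_{-\lambda,-\mu}(K)$, hence $n_+\bigl(H_{\lambda\mu}(K)\bigr)=n_-\bigl(H_{-\lambda,-\mu}(K)\bigr)$ together with $C^{+}(\lambda,\mu)=C^{-}(-\lambda,-\mu)$, is genuinely absent from the paper (which treats ``below'' and ``above'' separately, declaring the cases ``similar''); I checked it and it is valid, including that $U$ preserves $L^{2,o}(\T^2)$ and commutes with $V_{\lambda\mu}$. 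Third, your explicit min--max argument that $V_{\lambda\mu}$ is positive definite on the span of eigenvectors above $\cE_{\max}(K)$, yielding $n_+\bigl(H_{\lambda\mu}(K)\bigr)+n_-\bigl(H_{\lambda\mu}(K)\bigr)\le\Rank V_{\lambda\mu}\le 6$ for \emph{every} $K$, fills a point the paper only makes for $K=0$ inside the proof of Theorem \ref{teo:disc_quasi0}\,(i); it is needed to turn the lower bound $\ge 6$ into the stated equality, and you also correctly name Theorem \ref{teo:discr_Kne0} as the bridge to $K\neq 0$, which the paper's one-line proof (citing Theorem \ref{teo:constant} instead) leaves implicit. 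In short: no gap; your argument is a slightly more complete, better-organized version of the paper's, and the symmetry trick halves the work.
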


The next theorem establishes the exact number of eigenvalues of
$H_{\lambda\mu}(0)$ outside its essential spectrum. In particular,
it shows that the estimates for the numbers
$n_+({H}_{\lambda\mu}(K))$ and $n_-({H}_{\lambda\mu}(K))$ of
eigenvalues of the operator ${H}_{\lambda\mu}(K)$ obtained in
Theorem \ref{teo:disc_quasiK} are sharp.

\begin{theorem}\label{teo:disc_quasi0}
For various $\lambda,\mu\in\R$, the numbers and multiplicities of
eigenvalues of $H_{\lambda\mu}(0)$ outside the set
$\sigma_\mathrm{ess}\bigl(H_{\lambda\mu}(0)\bigr)$  are described in
the following statements.

\begin{itemize}
\item[(\rm{i})] For any $(\lambda,\mu)\in
\cC_{30}=\cC_{3}^-$ the operator $H_{\lambda\mu}(0)$ has exactly
three eigenvalues $z_1(\lambda,\mu;0)$, $z_2(\lambda,\mu;0)$ and
$z_3(\lambda,\mu;0)$ of multiplicity two satisfying
\begin{equation}\label{main01}
z_1(\lambda,\mu;0)<z_2(\lambda,\mu;0) <z_3(\lambda,\mu;0)<0.
\end{equation}

\item[(\rm{ii})] For any $(\lambda,\mu)\in
\cC_{20}:=\cC_{2}^{-}\cap \cC_{0}^{+}$ the operator
$H_{\lambda\mu}(0)$ has two eigenvalues $z_1(\lambda,\mu;0)$ and
$z_2(\lambda,\mu;0)$ of multiplicity two satisfying
\begin{equation}\label{main001}
z_1(\lambda,\mu;0)<z_2(\lambda,\mu;0)<0
\end{equation}
and it has no eigenvalues in $(8, +\infty)$

\item[(\rm{iii})] For any $(\lambda,\mu)\in
\cC_{21}:=\cC_{2}^{-}\cap \cC_{1}^+$, the operator
$H_{\lambda\mu}(0)$ has two eigenvalues $z_1(\lambda,\mu;0)$ and
$z_2(\lambda,\mu;0)$ of multiplicity two  in $(-\infty, 0)$ and it
has one eigenvalue of multiplicity two  in
 $(8, +\infty)$.

\item[(\rm{iv})] For any $(\lambda,\mu)\in
\cC_{11}:=\cC_{1}^{-}\cap \cC_{1}^{+}$, the operator
$H_{\lambda\mu}(0)$ has two eigenvalues $z_1(\lambda,\mu)<0$ and
$z_2(\lambda,\mu)>8$ of multiplicity two .

\item[(\rm{v})] For any $(\lambda,\mu)\in
\cC_{10}:=\cC_{1}^-\cap \cC_{0}^+$, the operator $H_{\lambda\mu}(0)$
has one eigenvalue $z(\lambda,\mu;0)$ of multiplicity two in
$(-\infty, 0)$, nevertheless it has no eigenvalues in $(8,
+\infty)$.

\item[(\rm{vi})] For any $(\lambda,\mu)\in
\cC_{00}:=\cC_{0}^-\cap \cC_{0}^+$, the operator $H_{\lambda\mu}(0)$
has no eigenvalues outside of the essential spectrum.

\item[(\rm{vii})] For any $(\lambda,\mu)\in
\cC_{01}:=\cC_{0}^-\cap \cC_{1}^+$, the operator $H_{\lambda\mu}(0)$
has one eigenvalue $z(\lambda,\mu;0)$ of multiplicity two in $(8,
+\infty)$ and it has no eigenvalues in $(-\infty, 0)$.

\item[(\rm{ix})] For any $(\lambda,\mu)\in
\cC_{02}:=\cC_{0}^-\cap \cC_{2}^+$, the operator $H_{\lambda\mu}(0)$
has two eigenvalues $z_1(\lambda,\mu;0)$ and $z_2(\lambda,\mu;0)$ of
multiplicity two  satisfying
\begin{equation}\label{main001b}
8<z_2(\lambda,\mu;0)<z_1(\lambda,\mu;0)
\end{equation}
and it has no eigenvalues in $(-\infty, 0)$.
\item[(\rm{viii})] For any $(\lambda,\mu)\in
\cC_{12}=\cC_{1}^-\cap \cC_{2}^+$, the operator $H_{\lambda\mu}(0)$
has one eigenvalue of multiplicity two  in $(-\infty, 0)$ and it has
two eigenvalues $z_1(\lambda,\mu;0)$ and $z_2(\lambda,\mu;0)$ of
multiplicity two  in $(8, +\infty)$.

\item[(\rm{x})] For any $(\lambda,\mu)\in
\cC_{03}:=\cC_{3}^+$, the operator $H_{\lambda\mu}(0)$ has exactly
three eigenvalues $z_1(\lambda,\mu;0)$, $z_2(\lambda,\mu;0)$ and
$z_3(\lambda,\mu;0)$ of multiplicity two satisfying
\begin{equation}\label{main01up}
8<z_3(\lambda,\mu;0)<z_2(\lambda,\mu;0)<z_1(\lambda,\mu;0).
\end{equation}
\end{itemize}
\end{theorem}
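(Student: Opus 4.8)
The plan is to reduce everything to the analysis of a single scalar function, a reduced Fredholm determinant, and to count its zeros component by component. First I would exploit the point symmetries of the fiber Hamiltonian at $K=0$. The dispersion $\cE_0(p)=2(2-\cos p_1-\cos p_2)$ is invariant under the swap $\sigma\colon(p_1,p_2)\mapsto(p_2,p_1)$ and under the sign flip $\tau\colon(p_1,p_2)\mapsto(p_1,-p_2)$, and the six generators $\sin p_i,\ \sin 2p_i,\ \sin p_i\cos p_j$ ($i\neq j$) of the range of $V_{\lambda\mu}$ are permuted up to sign by these maps. Splitting the six-dimensional range into its $\sigma$-symmetric and $\sigma$-antisymmetric parts block-diagonalizes the Birman--Schwinger matrix into two $3\times3$ blocks, and $U_\tau$ is a $\cE_0$-preserving unitary carrying one block onto the other (e.g. $\sin p_1+\sin p_2\mapsto\sin p_1-\sin p_2$), so the two blocks are congruent. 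Hence $\Delta_{\lambda\mu}(z)=\bigl[\delta_{\lambda\mu}(z)\bigr]^2$ for a $3\times3$ determinant $\delta_{\lambda\mu}$; every eigenvalue outside $[0,8]$ is forced to have even multiplicity, and the eigenvalues of $H_{\lambda\mu}(0)$ below $0$ (resp. above $8$) are in bijection with the zeros of $\delta_{\lambda\mu}$ in $(-\infty,0)$ (resp. $(8,\infty)$). Being a $3\times3$ determinant, $\delta_{\lambda\mu}$ has at most three zeros on each side; I would show they are simple, so the multiplicity is exactly two, by differentiating in $z$ and using $R_0'(z)=R_0(z)^2>0$ off the spectrum, which makes each Birman--Schwinger branch strictly monotone.

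Next I would pin down the threshold behaviour. All six generators vanish at the band edges $p=0$ and $p=(\pi,\pi)$, which cures the two-dimensional logarithmic threshold singularity, so the entries of $\delta_{\lambda\mu}$ extend continuously to $z=0$ and $z=8$; by Lemmas \ref{eigen-zeros} and \ref{lemm:asympt} the boundary values equal, up to a fixed positive factor, the functions $C^-(\lambda,\mu)$ and $C^+(\lambda,\mu)$ of \eqref{abobel}. The constants $\mu_0^\pm,\mu_1^\pm$ arise exactly as the roots of the two quadratics in $\mu$ produced when the threshold integrals are evaluated, which is the content of Lemma \ref{asympt(abcdef)}. Consequently a zero of $\delta_{\lambda\mu}$ crosses a threshold precisely when $(\lambda,\mu)$ meets $C^-=0$ or $C^+=0$; together with Lemmas \ref{simple1}--\ref{simple3} this shows that $n_-$ (resp. $n_+$) changes by exactly one across $C^-=0$ (resp. $C^+=0$), while Theorem \ref{teo:constant} guarantees it stays constant on each component $\cC^\pm_k$.

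With these tools the exact counts follow from one anchor and a propagation argument. At $(\lambda,\mu)=(0,0)$ one has $V_{0,0}=0$, so $H_{0,0}(0)=H_0(0)$ has no eigenvalues outside $[0,8]$; since $\mu_0^\pm,\mu_1^\pm$ are all negative, a direct check places $(0,0)$ in $\cC_0^-\cap\cC_0^+$, giving $n_-=n_+=0$ there. The sign of the leading threshold coefficient $C^\mp$ in Lemmas \ref{simple1}--\ref{simple3} fixes the direction of every crossing, so $n_-$ increases by one as $(\lambda,\mu)$ passes successively from $\cC_0^-$ into $\cC_1^-,\cC_2^-,\cC_3^-$ (deeper attraction, $\lambda,\mu$ more negative), the top value being the saturation of the rank bound three in the innermost component; hence $n_-=k$ on $\cC_k^-$, and the identical argument in the repulsive regime gives $n_+=k$ on $\cC_k^+$. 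Intersecting, on $\cC_{jk}=\cC_j^-\cap\cC_k^+$ the operator has exactly $j$ eigenvalues below $0$ and $k$ above $8$, each of multiplicity two, which is precisely the list (i)--(x). The ordering statements \eqref{main01}, \eqref{main001}, \eqref{main001b}, \eqref{main01up} then follow from the strict monotonicity of the zeros of $\delta_{\lambda\mu}$ and the order in which they detach from the thresholds as $C^\mp=0$ is crossed.

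The step I expect to be the main obstacle is the explicit evaluation of the threshold integrals that identifies the boundary values of $\delta_{\lambda\mu}$ with $C^\pm$ and yields the four critical abscissae $\mu_0^\pm,\mu_1^\pm$: the band-edge behaviour in two dimensions is delicate, the off-diagonal entries of the $3\times3$ matrix must be computed exactly, and one must verify that the resulting quadratics have the stated roots and the ordering \eqref{roots}. A secondary difficulty is the bookkeeping across the multi-piece components $\cC^\pm_k$, which are glued along the asymptote lines $\mu=\mp\mu_1^\pm$: one must confirm that the increments accumulate monotonically along every admissible path and that no spurious zero appears, a point ultimately controlled by the a priori bound of three zeros per side coming from the $3\times3$ reduction.
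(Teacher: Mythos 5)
Your setup (the os/oa splitting, the identity $\Delta_{\lambda\mu}=\delta^2$ forcing multiplicity two, and the identification of the threshold limits of the determinant with $C^\pm(\lambda,\mu)$) coincides with the paper's auxiliary machinery, and your use of an anchor point plus Theorem \ref{teo:constant} mirrors how the paper handles the low-count components (its proof of item (v) anchors at $\mu=0$ via Lemma \ref{simple1}). But the core of your counting argument has a genuine gap: the claim that each transversal crossing of a branch of $C^{\mp}(\lambda,\mu)=0$ changes $n_{\mp}$ by \emph{exactly one}. Nothing in your proposal (or in the paper's lemmas) justifies this. Lemma \ref{lemm:asympt} gives only $\Delta_{\lambda\mu}(z)=C^{-}(\lambda,\mu)+o(1)$ as $z\nearrow 0$, i.e.\ the constant term of the threshold expansion; that is enough to detect the \emph{parity} of the number of zeros in $(-\infty,0)$ (since $\Delta_{\lambda\mu}\to 1$ at $-\infty$), but not the count. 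Parity together with the rank-three bound leaves, e.g., ``one or three'' zeros on $\cC_3^-$ and ``zero or two'' on $\cC_2^-$ undecided: a branch could touch the threshold and retreat, or two branches could reach it simultaneously, and ruling this out would require finer threshold asymptotics than the paper provides. Your auxiliary argument for simplicity of the zeros — monotonicity of Birman--Schwinger branches from $R_0'(z)=R_0(z)^2>0$ — also fails in exactly the regions where it is most needed, because $V_{\lambda\mu}$ is sign-indefinite there (it must be, for eigenvalues to exist on both sides of $[0,8]$), so the symmetrized Birman--Schwinger operator is not self-adjoint and its ``branches'' need not be monotone.

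The paper closes precisely this gap by a different device: an interlacing argument inside each component. For $(\lambda,\mu)\in\cC_3^-$ it takes the two roots $\zeta_1^-(0,\mu)<\zeta_2^-(0,\mu)$ of the reduced determinant $\Delta_{0\mu}$ (Lemma \ref{simple3}), and, using the factorization \eqref{determ} together with the square-root identities \eqref{equalityroot1}--\eqref{equalityroot2}, rewrites $\Delta_{\lambda\mu}(\zeta_i^-(0,\mu))$ as $-\lambda\mu$ times a perfect square (formula \eqref{good1}), obtaining the definite signs $\Delta_{\lambda\mu}(\zeta_1^-)<0<\Delta_{\lambda\mu}(\zeta_2^-)$. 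Combined with $\Delta_{\lambda\mu}\to 1$ at $-\infty$ and $\lim_{z\nearrow0}\Delta_{\lambda\mu}(z)=C^-(\lambda,\mu)<0$, this forces at least three sign changes, and the rank bound caps the number at three — yielding \emph{exactly} three zeros, with the ordering \eqref{main01} for free. The same scheme with one root of $\Delta_{0\mu}$ gives the exact count two on $\cC_2^-$, and the sign of $C^+$ alone excludes zeros above $8$ there. If you want to keep your crossing/propagation strategy, you must either prove a ``one eigenvalue per crossing'' lemma (which needs the next-order threshold expansion of $\Delta_{\lambda\mu}$), or replace it by this kind of evaluation of $\Delta_{\lambda\mu}$ at interlacing points where its sign is computable.
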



\begin{figure}\label{1}
 \centering
\includegraphics[width=0.9\textwidth]{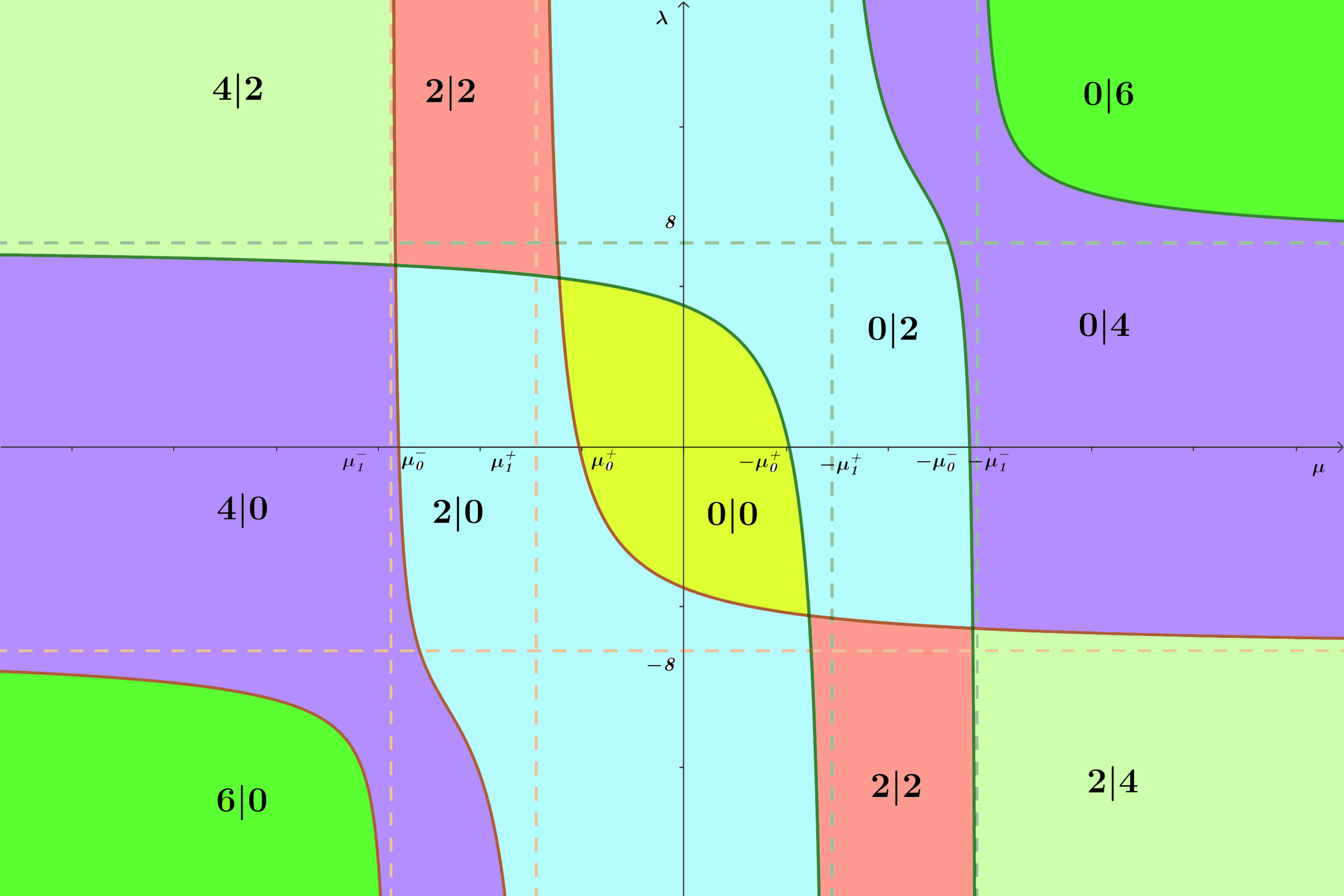}
\caption{ Partition of the $(\lambda,\mu)$-plane of parameters
$\lambda,\mu \in \R$ in the connected components $\cC_{\alpha\beta},
\alpha,\beta=0,1,2,3$ (see Theorem \ref{teo:disc_quasi0}). These
components are tagged by the symbols $N_-|N_+$ formed of the numbers
$N_{-}:=n_-(H_{\lambda, \mu}(0))$ and $N_{+}:=n_+(H_{\lambda,
\mu}(0))$ of eigenvalues of $H_{\lambda\mu}(0)$ lying below and
above the essential spectrum, respectively. Until the point
$(\lambda,\mu)$ does not cross any of the borders between
$\cC_{\alpha\beta}$,  no change occurs in $N_-$ and $N_+$. However,
as soon as $(\lambda,\mu)$ crosses one of those borders, the
essential spectrum of $H_{\lambda\mu}(0)$ either {\it gives birth}
or {\it absorbs} eigenvalues of $H_{\lambda\mu}(0)$.}
\end{figure}

\section{Auxiliary statements}\label{sec:auxiliary}
Let $L^{2,\mathrm{os}}(\T^2)\subset L^{2,\mathrm{o}}(\T^2)$ and
$L^{2,\mathrm{oa}}(\T^2)\subset L^{2,\mathrm{o}}(\T^2)$ be the
subspaces of odd-symmetric and odd-antisymmetric functions defined
as
\begin{align*}
&L^{2,\mathrm{os}}(\T^2)=\{{f}\in L^{2,o}(\T^2): {f}(p_1,p_2)=
{f}(p_2,p_1), \mbox{for a.e}\,\,(p_1,p_2)\in\T^2\}
\end{align*}
and
\begin{align*}
&L^{2,\mathrm{oa}}(\T^2)=\{{f}\in L^{2,o}(\T^2): {f}(p_1,p_2)=
-{f}(p_2,p_1), \mbox{for a.e}\,\,(p_1,p_2)\in\T^2\},
\end{align*}
respectively.

\begin{lemma}
The equality
\begin{equation}\label{evenodd}
L^{2,\mathrm{o}}(\T^2)=L^{2,\mathrm{os}}(\T^2)\oplus
L^{2,\mathrm{oa}}(\T^2)
\end{equation}
holds true.
\end{lemma}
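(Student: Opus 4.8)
The plan is to obtain \eqref{evenodd} as the eigenspace decomposition of the coordinate-exchange operator. First I would introduce the operator $S$ acting on $L^{2,\mathrm{o}}(\T^2)$ by $(Sf)(p_1,p_2)=f(p_2,p_1)$. Since the map $(p_1,p_2)\mapsto(p_2,p_1)$ is a measure-preserving involution of $\T^2$, the operator $S$ is unitary and satisfies $S^2=I$; being a unitary involution, it is also self-adjoint. The one genuine point to verify is that $S$ preserves oddness, so that it really acts on $L^{2,\mathrm{o}}(\T^2)$ and not merely on $L^2(\T^2)$: if $f(-p)=-f(p)$ then $(Sf)(-p_1,-p_2)=f(-p_2,-p_1)=-f(p_2,p_1)=-(Sf)(p_1,p_2)$, so $Sf$ is again odd.

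Next I would set $P_{\pm}=\tfrac{1}{2}(I\pm S)$. From $S^2=I$ and $S^*=S$ one checks at once that $P_{\pm}$ are orthogonal projections with $P_{+}+P_{-}=I$ and $P_{+}P_{-}=P_{-}P_{+}=0$. By definition, $f\in\Ran P_{+}$ iff $Sf=f$, i.e.\ iff $f(p_1,p_2)=f(p_2,p_1)$, so $\Ran P_{+}=L^{2,\mathrm{os}}(\T^2)$; likewise $f\in\Ran P_{-}$ iff $Sf=-f$, giving $\Ran P_{-}=L^{2,\mathrm{oa}}(\T^2)$. Since $S$ preserves oddness, both ranges lie inside $L^{2,\mathrm{o}}(\T^2)$.

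Orthogonality of the two subspaces then follows from unitarity of $S$: for $f\in L^{2,\mathrm{os}}(\T^2)$ and $g\in L^{2,\mathrm{oa}}(\T^2)$ one has $\langle f,g\rangle=\langle Sf,Sg\rangle=\langle f,-g\rangle=-\langle f,g\rangle$, whence $\langle f,g\rangle=0$. Finally, for any $f\in L^{2,\mathrm{o}}(\T^2)$ the identity $f=P_{+}f+P_{-}f$ exhibits the required decomposition explicitly, with symmetric part $\tfrac12\bigl(f(p_1,p_2)+f(p_2,p_1)\bigr)$ and antisymmetric part $\tfrac12\bigl(f(p_1,p_2)-f(p_2,p_1)\bigr)$. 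Combining the three facts, namely that the ranges are the asserted subspaces, that they are mutually orthogonal, and that $P_{+}+P_{-}=I$, yields the orthogonal direct sum \eqref{evenodd}.

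As for difficulty, there is essentially no serious obstacle: the statement is the standard splitting of a Hilbert space into the $\pm1$ eigenspaces of a self-adjoint unitary involution. The only step deserving explicit mention is the compatibility check that the exchange operator $S$ respects the oddness constraint, so that the symmetric and antisymmetric parts of an odd function remain odd; once this is noted, the rest is routine.
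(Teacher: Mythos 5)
Your proof is correct and rests on the same idea as the paper's: splitting an odd function into its symmetric and antisymmetric parts, $f=\tfrac12\bigl(f(p_1,p_2)+f(p_2,p_1)\bigr)+\tfrac12\bigl(f(p_1,p_2)-f(p_2,p_1)\bigr)$, which is exactly what your projections $P_{\pm}=\tfrac12(I\pm S)$ produce. The paper's proof is a single sentence asserting this decomposition; your operator-theoretic packaging via the exchange involution $S$ merely formalizes it, with the added benefit of making explicit the orthogonality of the two subspaces and the fact that $S$ preserves oddness, both of which the paper leaves to the reader.
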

\begin{proof}
The proof follows from the fact that each element of
$L^{2,\mathrm{o}}(\T^2)$ may be represented as the sum of
a function in $L^{2,\mathrm{os}}(\T^2)$ and a function in $L^{2,\mathrm{oa}}(\T^2)$.%
\end{proof}

The operator $H_0(0)$ is the multiplication operator by the
symmetric function $\cE_0(p)=2\epsilon(p)=2\epsilon(p_1,p_2)$ in
$L^{2,o}(\T^2).$ Hence, for each
$\theta\in\{\mathrm{os},\mathrm{oa}\}$ the subspace
$L^{2,\theta}(\T^2)$ is invariant with respect to $H_0(0).$ Now
recall that the interaction operator $V_{\lambda\mu}$ has the form
\eqref{moment_poten} and, thus, it reads as

\begin{align*}
[V_{\lambda\mu}f](p)=&\frac{\lambda}{(2\pi)^2}\int\limits_{\mathbb{T}^2}
(\sin p_1\sin q_1+\sin
p_2\sin q_2) f(q)\mathrm{d}q\\
&+\frac{\mu}{(2\pi)^2}\int\limits_{\mathbb{T}^2} (\sin 2p_1\sin
2q_1+\sin
2p_2\sin 2q_2) f(q)\mathrm{d}q\\
&+\frac{\mu}{2\pi^2}\int\limits_{\mathbb{T}^2}(\sin p_1\cos p_2\sin
q_1\cos q_2 +\sin p_2\cos p_1\sin q_2\cos q_1) f(q)\mathrm{d}q.
\end{align*}
By applying the  equalities
\begin{align*}
2\sin p_1\sin q_1 &+ 2\sin p_2\sin q_2\\
&=(\sin p_1 + \sin p_2)(\sin q_1 + \sin q_2)
 + (\sin p_1 - \sin
p_2)(\sin q_1 - \sin q_2),\\
2\sin 2p_1\sin2&q_1+ 2\sin 2p_2\sin 2q_2\\
&=(\sin 2p_1 + \sin 2p_2)(\sin 2q_1 + \sin 2q_2) + (\sin 2p_1 - \sin
2p_2)(\sin 2q_1 - \sin 2q_2),\\
2\sin p_1\cos p_2&\sin q_1\cos q_2+2\sin p_2\cos p_1\sin q_2\cos q_1\\
&=(\sin p_1\cos p_2 + \sin p_2\cos p_1)(\sin q_1\cos q_2 + \sin q_2\cos q_1)\\
&+(\sin p_1\cos p_2 - \sin p_2\cos p_1)(\sin q_1\cos q_2 - \sin
q_2\cos q_1)
\end{align*}
and denoting by ${V}_{\lambda\mu}^\mathrm{\theta}$   the respective
part of ${V}_{\lambda \mu }$ in the reducing subspace
$L^{2,\theta}(\T^2),\, \theta\in\{\mathrm{os},\mathrm{oa}\}$, we
arrive at the expressions
\begin{align*}
[{V}_{\lambda\mu}^\mathrm{os}{f}](p)=& \frac{\lambda}{8\pi^2}(\sin
p_1 +
\sin p_2)\int\limits_{\mathbb{T}^2} (\sin q_1 + \sin q_2) f(q)\mathrm{d}q\\
&+\frac{\mu}{8\pi^2}(\sin 2p_1 +
\sin 2p_2)\int\limits_{\mathbb{T}^2} (\sin 2q_1 + \sin 2q_2) f(q)\mathrm{d}q\\
&+\frac{\mu}{4\pi^2}(\sin p_1\cos p_2 + \sin p_2\cos
p_1)\int\limits_{\mathbb{T}^2} (\sin q_1\cos q_2 + \sin q_2\cos q_1)
f(q)\mathrm{d}q
\end{align*}
and
\begin{align*}
[{V}_{\lambda\mu}^\mathrm{oa}{f}](p)=& \frac{\lambda}{8\pi^2}(\sin
p_1 -
\sin p_2)\int\limits_{\mathbb{T}^2} (\sin q_1 - \sin q_2) f(q)\mathrm{d}q\\
&+\frac{\mu}{8\pi^2}(\sin 2p_1 -
\sin 2p_2)\int\limits_{\mathbb{T}^2} (\sin 2q_1 - \sin 2q_2) f(q)\mathrm{d}q\\
&+\frac{\mu}{4\pi^2}(\sin p_1\cos p_2 - \sin p_2\cos
p_1)\int\limits_{\mathbb{T}^2} (\sin q_1\cos q_2 - \sin q_2\cos q_1)
f(q)\mathrm{d}q.
\end{align*}
It follows from the above expressions that
$$
{H}_{\lambda\mu}(0)\big|_{L^{2,\theta}(\T^2)} =
{H}_{\lambda\mu}^\theta(0):=H_0 + {V}_{\lambda\mu}^\theta,\,\,
\text{for}\,\, \theta\in\{\mathrm{os},\mathrm{oa}\}.
$$
Therefore,
\begin{equation}\label{sigma_Hmu}
\sigma\Big({H}_{\lambda\mu}(0)\Big)=
\bigcup_{\theta\in\{\mathrm{os},\mathrm{oa}\}}
\sigma\Big({H}_{\lambda\mu}(0)\big|_{L^{2,\theta}(\T^2)}\Big),
\end{equation}
\black where $A\big|_\cH$ is the restriction of a self-adjoint
operator $A$ on a reducing subspace $\cH$. Thus, the study of the
discrete spectrum of  ${H}_{\lambda\mu}(0)$ reduces to that for the
restrictions of ${H}_{\lambda\mu}(0)$ onto each subspace
$L^{2,\theta}(\T^2)$, $\theta\in\{\mathrm{os},\mathrm{oa}\}$.

\subsection{The Lippmann--Schwinger operator}

Let $\{\alpha_{i}^\theta,\,\,i=1,2,3\} $ be a  system of vectors in
$L^{2,\theta}(\T^2),\,\theta\in\{\mathrm{os},\mathrm{oa}\}$, with
\begin{equation}\label{ons1}
\alpha_{1}^{os}(p)=\dfrac{\sin p_{1}+\sin
p_{2}}{2\pi},\,\,\alpha_{2}^{os}(p)=\dfrac{\sin 2p_{1}+\sin
2p_{2}}{2\pi},\,\,\alpha_{3}^{os}(p)=\dfrac{\sin p_{1}\cos
p_{2}+\sin p_{2}\cos p_{1} }{\sqrt{2}\pi}
\end{equation}
and
\begin{equation}\label{ons2}
\alpha_{1}^{oa}(p)=\dfrac{\sin p_{1}-\sin
p_{2}}{2\pi},\,\,\alpha_{2}^{oa}(p)=\dfrac{\sin 2p_{1}-\sin
2p_{2}}{2\pi},\,\,\alpha_{3}^{oa}(p)=\dfrac{\sin p_{1}\cos
p_{2}-\sin p_{2}\cos p_{1} }{\sqrt{2}\pi}.
\end{equation}
One easily verifies by inspection that the vectors \eqref{ons1} and
\eqref{ons2} are orthonormal in $L^{2,o}(\T^2)$. By using the
orthonormal systems \eqref{ons1} and \eqref{ons2}  one obtains
\begin{align}\label{repr1}
&{V}_{\lambda\mu}^\theta{f}=\frac{\lambda}{2}({f},\alpha_{1}^\theta)\alpha_{1}^\theta+
\frac{\mu}{2}({f},\alpha_{2}^\theta)\alpha_{2}^\theta+
\frac{\mu}{2}({f},\alpha_{3}^\theta)\alpha_{3}^\theta ,\quad
\theta\in\{\mathrm{os},\,\,\mathrm{oa}\},
\end{align}
where  $(\cdot,\cdot)$ is the inner product in $L^{2,\theta}(\T^2).$
For any $z\in\C \setminus[0,\,8]$ we define (the transpose of) the
Lippmann-Schwinger operator (see., e.g., \cite{LSchwinger:1950}) as
\begin{equation*}\label{LSchw1}
{B}_{\lambda\mu}^{\theta}(0,z)=-{V}_{\lambda\mu}^{\theta}{R}_0(0,z),\,
\theta\in\{\mathrm{os},\mathrm{oa}\},
\end{equation*}
where  ${R}_0(0,z):= [{H}_0(0)-zI]^{-1},\,
z\in\mathbb{C}\setminus[0,\,8]$, is the resolvent of the operator
${H}_0(0)$.
\begin{lemma}\label{eigen-eigenvalue}
For each $\lambda,\mu \in \R$  the number $z\in \mathbb{C}\setminus
[0,\,8]$ is an eigenvalue of the operator
${H}_{\lambda\mu}^{\theta}(0)$ if and only if the number $1$ is an
eigenvalue for
 ${B}_{\lambda\mu}^{\theta}(0,z),\,\theta\in\{\mathrm{os},\mathrm{oa}\}$.
\end{lemma}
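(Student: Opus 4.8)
The statement is a form of the Birman--Schwinger principle adapted to the reducing subspace $L^{2,\theta}(\T^2)$, and the plan is to prove it by exhibiting an explicit correspondence between eigenvectors of $H_{\lambda\mu}^{\theta}(0)$ at $z$ and eigenvectors of $B_{\lambda\mu}^{\theta}(0,z)$ at $1$. The one fact I would use throughout is that for $z\in\C\setminus[0,8]$ the operator $H_0(0)-z$ is boundedly invertible on $L^{2,\theta}(\T^2)$, since $\sigma(H_0(0))=[0,8]$; consequently $R_0(0,z)$ is a bijection of $L^{2,\theta}(\T^2)$ onto itself. I would also keep in mind that $L^{2,\theta}(\T^2)$ is reducing for both $H_0(0)$ and $V_{\lambda\mu}$, so all vectors constructed below stay in the correct subspace.

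For the forward implication, suppose $\psi\in L^{2,\theta}(\T^2)$ is a nonzero eigenvector of $H_{\lambda\mu}^{\theta}(0)$ with eigenvalue $z$. Rewriting $(H_0(0)+V_{\lambda\mu}^{\theta}-z)\psi=0$ as $(H_0(0)-z)\psi=-V_{\lambda\mu}^{\theta}\psi$ and applying $R_0(0,z)$ gives $\psi=-R_0(0,z)V_{\lambda\mu}^{\theta}\psi$. Setting $\phi:=V_{\lambda\mu}^{\theta}\psi$ and applying $V_{\lambda\mu}^{\theta}$ to this identity yields $\phi=-V_{\lambda\mu}^{\theta}R_0(0,z)\phi=B_{\lambda\mu}^{\theta}(0,z)\phi$. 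It then remains to check $\phi\neq0$: if $\phi=V_{\lambda\mu}^{\theta}\psi=0$ then $(H_0(0)-z)\psi=0$, forcing $\psi=0$ by injectivity of $H_0(0)-z$, a contradiction. Hence $1$ is an eigenvalue of $B_{\lambda\mu}^{\theta}(0,z)$.

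For the reverse implication, suppose $\phi\neq0$ satisfies $B_{\lambda\mu}^{\theta}(0,z)\phi=\phi$, i.e. $-V_{\lambda\mu}^{\theta}R_0(0,z)\phi=\phi$. Put $\psi:=-R_0(0,z)\phi$. Then $V_{\lambda\mu}^{\theta}\psi=-V_{\lambda\mu}^{\theta}R_0(0,z)\phi=\phi$, while $(H_0(0)-z)\psi=-\phi$; adding the two relations gives $(H_0(0)+V_{\lambda\mu}^{\theta}-z)\psi=0$. Since $\phi\neq0$ and $R_0(0,z)$ is invertible, $\psi\neq0$, so $z$ is an eigenvalue of $H_{\lambda\mu}^{\theta}(0)$.

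There is essentially no serious obstacle here: the argument is the standard resolvent manipulation, and the only points requiring care are the two nonvanishing checks, both of which reduce to the invertibility of $H_0(0)-z$ off the essential spectrum $[0,8]$. If one wished to track multiplicities as well (which is not needed for the statement as phrased), I would note that the maps $\psi\mapsto V_{\lambda\mu}^{\theta}\psi$ and $\phi\mapsto -R_0(0,z)\phi$ furnish mutually inverse linear bijections between the two eigenspaces, so that the geometric multiplicities coincide; this is the observation that will later let me count eigenvalues of $H_{\lambda\mu}^{\theta}(0)$ via zeros of the associated Fredholm determinant.
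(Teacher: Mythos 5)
Your proof is correct: it is precisely the standard Lippmann--Schwinger/Birman--Schwinger resolvent manipulation that the paper itself invokes, since the authors omit the proof of this lemma as "quite standard" with a reference to Albeverio et al. Both directions and the two nonvanishing checks (via bounded invertibility of $H_0(0)-z$ for $z\in\C\setminus[0,8]$) are exactly what is needed, and your closing remark on matching geometric multiplicities is a correct bonus consistent with how the paper later counts eigenvalues via the determinant $\Delta_{\lambda\mu}(z)$.
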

The proof of this lemma is quite standard (see., e.g.,
\cite{Albeverio:1988}) and, thus, we omit~it.

In the following, we identify the symbols $os$ and $oa$ with the
signs $\mathrm{+}$ and $\mathrm{-}$, respectively.

The representation \eqref{repr1} yields the equivalence of the
Lippmann-Schwinger equation
\begin{align*}
{B}_{\lambda\mu}^{\theta}(0,z){\varphi}={\varphi}, \,\,\,{\varphi}
\in L^{2,\theta}(\T^2),\,\theta\in\{\mathrm{os},\mathrm{oa}\}
\end{align*}
to the following algebraic linear system in
$x_i:=({\varphi},\alpha_{i}^\theta),\,\,i=1,2,3$:
\begin{equation}\label{system}
\left\lbrace\begin{array}{ccc} (1+\lambda
a^{\pm}(z))x_{1}+ \lambda c^{\pm}(z) x_{2}+\lambda  d^{\pm}(z)x_{3}=0,\\
\mu  c^{\pm}(z)x_{1}+(1+\mu b^{\pm}(z))x_{2}+\mu  e^{\pm}(z)x_{3}=0,\\
2\mu  d^{\pm}(z)x_{1}+2\mu e^{\pm}(z)x_{2}+(1+\mu f^{\pm}(z))x_{3}=0,\\
\end{array}\right.
\end{equation}
where
\begin{align}
\label{fa} &a^{\pm}(z) =\frac{1}{8\pi^2}\int\limits_
{\mathbb{T}^2}\frac{(\sin p_{1}\pm\sin p_2)^2\ \mathrm{d}p}{\cE_{0}(p)-z},\\
\label{fb} &b^{\pm}(z) =\frac{1}{8\pi^2}\int\limits_
{\mathbb{T}^2}\frac{(\sin 2p_{1}\pm\sin 2p_2)^2\ \mathrm{d}p}{\cE_{0}(p)-z},\\
\label{fc} &c^{\pm}(z) =\frac{1}{8\pi^2}\int\limits_
{\mathbb{T}^2}\frac{(\sin p_{1}\pm\sin p_2)(\sin 2p_{1}\pm\sin 2p_2)
\mathrm{d}p}{\cE_{0}(p)-z},\\
\label{fd} &d^{\pm}(z) =\frac{\sqrt{2}}{8\pi^2}\int\limits_
{\mathbb{T}^2}\frac{(\sin p_{1}\pm\sin p_2)
(\sin p_{1}\cos p_2\pm\sin p_2\cos p_1) \mathrm{d}p}{\cE_{0}(p)-z},\\
\label{fe} &e^{\pm}(z) =\frac{\sqrt{2}}{8\pi^2}\int\limits_
{\mathbb{T}^2}\frac{(\sin 2p_{1}\pm\sin 2p_2)
(\sin p_{1}\cos p_2\pm\sin p_2\cos p_1) \mathrm{d}p}{\cE_{0}(p)-z},\\
\label{ff} &f^{\pm}(z) =\frac{1}{2\pi^2}\int\limits_
{\mathbb{T}^2}\frac{(\sin p_{1}\cos p_2\pm\sin p_2\cos p_1)^2
\mathrm{d}p}{\cE_{0}(p)-z}.
\end{align}
It is easy to check that the functions
$a^{\pm}(z),b^{\pm}(z),c^{\pm}(z),d^{\pm} (z) , e^ {\pm}(z)$ and
$f^{\pm}(z)$ do not depend on the sign $\pm$. Thus,  we skip the
sign superscripts and denote these functions simply by
$a(z),b(z),c(z),d(z), e(z)$ and $f(z)$.

\begin{remark}
\label{Rabcd} One easily verifies by inspection that the following
relations hold:
\begin{align*}
&b(z)+\sqrt{2}e(z)=(4-z)\,c(z),\\
&\sqrt{2}e(z)+f(z)=\sqrt{2}\,(4-z)\,d(z),\\
&c(z)+\sqrt{2}\,d(z)=(4-z)\,a(z)+\frac{1}{2}.
\end{align*}
\end{remark}

The reasoning similar to the one we used in the case of the
functions \eqref{fa}--\eqref{ff} allows us to conclude that the
determinant of the operator $I-{B}_{\lambda\mu}^{\theta}(0,z)$  does
not depend on $\theta$, too. Thus we write
$$\Delta_{\lambda\mu}(z):=
\Delta_{\lambda\mu}^{\theta}(z):=\det[I-{B}_{\lambda\mu}^{\theta}(0,z)],
\,\,z\in\mathbb{C}\setminus[0,\,8],\,\theta\in\{\mathrm{os},\mathrm{oa}\}.$$

\begin{lemma}\label{eigen-zeros}
A number $z\in \mathbb{C}\setminus [0,\,8]$ is an eigenvalue of the
operator ${H}_{\lambda\mu}(0)$ if and only if
\begin{equation}\label{cont01}
\Delta_{\lambda\mu}(z)=0.
\end{equation}
If $z$ is such an eigenvalue then necessarily it has multiplicity
two.
\end{lemma}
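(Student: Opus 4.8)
The plan is to reduce the eigenvalue problem to the vanishing of the $3\times 3$ determinant coming from the linear system \eqref{system}, and then to read off the multiplicity from the $os$/$oa$ decomposition \eqref{sigma_Hmu}. First I would invoke Lemma~\ref{eigen-eigenvalue}: for each $\theta\in\{\mathrm{os},\mathrm{oa}\}$, a number $z\in\C\setminus[0,8]$ is an eigenvalue of ${H}_{\lambda\mu}^{\theta}(0)$ if and only if $1$ is an eigenvalue of ${B}_{\lambda\mu}^{\theta}(0,z)$, i.e.\ if and only if the homogeneous system \eqref{system} has a nontrivial solution $(x_1,x_2,x_3)$. Since $\alpha_1^\theta,\alpha_2^\theta,\alpha_3^\theta$ are orthonormal and ${V}_{\lambda\mu}^\theta$ has the rank-three form \eqref{repr1}, the map $\varphi\mapsto(x_1,x_2,x_3)=\bigl((\varphi,\alpha_1^\theta),(\varphi,\alpha_2^\theta),(\varphi,\alpha_3^\theta)\bigr)$ is a linear isomorphism from the eigenspace $\{\varphi:{B}_{\lambda\mu}^{\theta}(0,z)\varphi=\varphi\}$ onto the solution space of \eqref{system}; in particular the geometric multiplicity of the eigenvalue $z$ of ${H}_{\lambda\mu}^{\theta}(0)$ equals the dimension of the null space of the coefficient matrix of \eqref{system}.

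Next I would observe that \eqref{system} has a nontrivial solution precisely when its coefficient determinant vanishes. A direct computation shows that this determinant is exactly $\det[I-{B}_{\lambda\mu}^{\theta}(0,z)]=\Delta_{\lambda\mu}^{\theta}(z)$, which, as already noted in the text preceding the lemma, is independent of $\theta$ and hence equals $\Delta_{\lambda\mu}(z)$. Thus for each fixed $\theta$,
\[
z\in\C\setminus[0,8]\ \text{is an eigenvalue of}\ {H}_{\lambda\mu}^{\theta}(0)
\iff \Delta_{\lambda\mu}(z)=0 .
\]
Combining this with the orthogonal decomposition \eqref{sigma_Hmu}, which gives
$\sigma\bigl({H}_{\lambda\mu}(0)\bigr)=\sigma\bigl({H}_{\lambda\mu}^{\mathrm{os}}(0)\bigr)\cup\sigma\bigl({H}_{\lambda\mu}^{\mathrm{oa}}(0)\bigr)$, establishes the equivalence \eqref{cont01}: $z$ is an eigenvalue of ${H}_{\lambda\mu}(0)$ iff it is an eigenvalue of at least one of the two restrictions, iff $\Delta_{\lambda\mu}(z)=0$.

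For the multiplicity claim, the key point is that the very same scalar equation $\Delta_{\lambda\mu}(z)=0$ governs the spectrum in \emph{both} reducing subspaces simultaneously, because $\Delta_{\lambda\mu}^{\mathrm{os}}(z)=\Delta_{\lambda\mu}^{\mathrm{oa}}(z)$. Hence whenever $z$ is an eigenvalue of one restriction it is automatically an eigenvalue of the other, and since $L^{2,\mathrm{os}}(\T^2)$ and $L^{2,\mathrm{oa}}(\T^2)$ are orthogonal, the eigenspaces add up, contributing at least one dimension from each sector and giving total multiplicity at least two. To pin the multiplicity down to \emph{exactly} two — which is the part I expect to be the main obstacle — I would show that within each sector $\theta$ the eigenvalue is simple, i.e.\ the coefficient matrix of \eqref{system} has a one-dimensional kernel at any of its zeros. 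This should follow from the strict monotonicity of the diagonal resolvent integrals $a(z),b(z),f(z)$ on each spectral gap $(-\infty,0)$ and $(8,\infty)$ together with the Cauchy–Schwarz-type relations among $a,\dots,f$ (the identities recorded in Remark~\ref{Rabcd}), which prevent the rank of the matrix from dropping by more than one; equivalently, one checks that the $3\times3$ matrix cannot vanish on a two-dimensional subspace for $z$ outside $[0,8]$. This nondegeneracy, sector by sector, yields geometric (hence, by self-adjointness, algebraic) multiplicity exactly one in each of $L^{2,\mathrm{os}}(\T^2)$ and $L^{2,\mathrm{oa}}(\T^2)$, and therefore exactly two overall.
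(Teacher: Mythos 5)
Your route is the same as the paper's: the paper omits the argument as ``quite standard,'' and it handles the multiplicity statement exactly as you do, via the decomposition ${H}_{\lambda\mu}(0)={H}^{\os}_{\lambda\mu}(0)\oplus{H}^{\oa}_{\lambda\mu}(0)$ together with the $\theta$-independence of $\Delta^{\theta}_{\lambda\mu}$ (see the remark right after the lemma). Your first two paragraphs --- Lemma~\ref{eigen-eigenvalue}, the isomorphism between the fixed-point space of ${B}^{\theta}_{\lambda\mu}(0,z)$ and the null space of the coefficient matrix of \eqref{system}, the $\theta$-independence of the determinant, and the union \eqref{sigma_Hmu} --- are correct and give both the equivalence \eqref{cont01} and the fact that every eigenvalue of ${H}_{\lambda\mu}(0)$ outside $[0,8]$ has \emph{even} multiplicity, namely twice the nullity of that $3\times3$ matrix.

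The gap is in your final step, multiplicity \emph{exactly} two, i.e.\ that the coefficient matrix of \eqref{system} never has a two-dimensional kernel. The mechanism you invoke --- strict monotonicity of $a,b,f$ plus the identities of Remark~\ref{Rabcd} ``preventing the rank from dropping by more than one'' --- does not establish this. Conjugating the coefficient matrix by $\diag(1,1,\sqrt{2})$ puts it in the form $I+\diag(\lambda,\mu,\mu)\,G(z)$ with $G(z)$ the symmetric weighted Gram matrix built from $a,b,f,c,\sqrt{2}d,\sqrt{2}e$; computing all $2\times2$ minors shows that (for $\lambda\mu\neq0$) its kernel is two-dimensional precisely when
\begin{equation*}
\frac{1}{\lambda}=\frac{c(z)d(z)}{e(z)}-a(z),\qquad
\frac{1}{\mu}=\frac{c(z)e(z)}{d(z)}-b(z)=\frac{2d(z)e(z)}{c(z)}-f(z),
\end{equation*}
and the second equality amounts to the single scalar condition
$e(z)\bigl(c(z)^{2}-2d(z)^{2}\bigr)=c(z)d(z)\bigl(b(z)-f(z)\bigr)$ on $z\in\R\setminus[0,8]$. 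Neither monotonicity nor the Cauchy--Schwarz inequalities ($c^{2}<ab$, etc.) exclude a solution $z_{0}$ of this equation: a symmetric $3\times3$ matrix depending on the three real parameters $(\lambda,\mu,z)$ can a priori hit the codimension-three set of rank-one matrices, and if it did, the corresponding $(\lambda,\mu)$ would give $z_{0}$ multiplicity four, contradicting the lemma. A genuine proof therefore has to rule out zeros of $e(c^{2}-2d^{2})-cd(b-f)$ outside $[0,8]$ (or supply some equivalent nondegeneracy argument), which your sketch does not do. In fairness, the paper does not do this either --- it simply asserts the per-sector simplicity in the remark following the lemma --- but your proposal commits to a specific mechanism, and that mechanism, as stated, does not close the argument.
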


The proof of this lemma for each
$\theta\in\{\mathrm{os},\mathrm{oa}\}$ is quite standard (cf., e.g.,
\cite{LKhKh:2021,LBozorov:2009})

\begin{remark}
Notice that for each $\theta\in\{\mathrm{os},\mathrm{oa}\}$ any
eigenvalue $z\in \mathbb{C}\setminus [0,\,8]$ of
${H}^{\theta}_{\lambda\mu}(0)$ is simple. Since
${H}_{\lambda\mu}(0)={H}^{os}_{\lambda\mu}(0)\oplus{H}^{oa}_{\lambda\mu}(0)$,
the same $z$ is an eigenvalue of multiplicity two for
${H}_{\lambda\mu}(0)$.
\end{remark}

\begin{lemma}\label{determ_form}
For any $\lambda,\mu \in \R$  the determinant
$\Delta_{\lambda\mu}(z)$ has the form
\begin{align}\label{determ}
\Delta_{\lambda\mu}(z)=\Delta_{\lambda0}(z)\Delta_{0\mu}(z)+\Delta^{(12)}_{\lambda\mu}(z),
\end{align}
where
\begin{align}\label{determ1}
&\Delta_{\lambda0}(z)=1+\lambda a(z),\\ \label{determ2}
&\Delta_{0\mu}(z)=(1+\mu b(z))(1+\mu f(z))-2\mu^2e^2(z),\black\\
\label{determ3}
&\Delta^{(12)}_{\lambda\mu}(z)=4\lambda\mu^2c(z)d(z)e(z)-\lambda\mu
c^2(z)(1+\mu f(z)) -2\lambda\mu d^2(z)(1+\mu b(z)).
\end{align}
\end{lemma}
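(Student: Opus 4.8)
The plan is to reduce the Fredholm determinant to an ordinary $3\times3$ determinant and then expand it by cofactors. Since, by \eqref{repr1}, the operator $B_{\lambda\mu}^{\theta}(0,z)=-V_{\lambda\mu}^{\theta}R_0(0,z)$ has rank at most three with range contained in $\mathrm{span}\{\alpha_1^\theta,\alpha_2^\theta,\alpha_3^\theta\}$, its Fredholm determinant $\det[I-B_{\lambda\mu}^{\theta}(0,z)]$ coincides with the determinant of the matrix representing $I-B_{\lambda\mu}^{\theta}(0,z)$ on that three-dimensional invariant subspace. First I would observe that this matrix is exactly the coefficient matrix of the linear system \eqref{system}, written in the coordinates $x_i=(\varphi,\alpha_i^\theta)$: indeed \eqref{system} is precisely the coordinate form of $(I-B_{\lambda\mu}^{\theta}(0,z))\varphi=0$. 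Hence, independently of $\theta$,
\begin{equation*}
\Delta_{\lambda\mu}(z)=\det
\begin{pmatrix}
1+\lambda a(z) & \lambda c(z) & \lambda d(z)\\
\mu c(z) & 1+\mu b(z) & \mu e(z)\\
2\mu d(z) & 2\mu e(z) & 1+\mu f(z)
\end{pmatrix}.
\end{equation*}

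Next I would expand this determinant along the first row. The first cofactor produces $(1+\lambda a)\bigl[(1+\mu b)(1+\mu f)-2\mu^2 e^2\bigr]$, in which I recognize the factor $1+\lambda a=\Delta_{\lambda0}(z)$ from \eqref{determ1} and the bracket as $\Delta_{0\mu}(z)$ from \eqref{determ2}; this already yields the product term $\Delta_{\lambda0}(z)\Delta_{0\mu}(z)$. The two remaining cofactors, coming from the off-diagonal entries $\lambda c$ and $\lambda d$ of the first row, give $-\lambda c\,[\mu c(1+\mu f)-2\mu^2 de]$ and $+\lambda d\,[2\mu^2 ce-2\mu d(1+\mu b)]$. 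Collecting these I obtain $4\lambda\mu^2 cde-\lambda\mu c^2(1+\mu f)-2\lambda\mu d^2(1+\mu b)$, which is exactly $\Delta^{(12)}_{\lambda\mu}(z)$ of \eqref{determ3}. Summing the three contributions gives \eqref{determ}.

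As an independent check fixing the signs and scalar factors, I would examine the two degenerate cases. Setting $\mu=0$ collapses the matrix to an upper-triangular one with determinant $1+\lambda a=\Delta_{\lambda0}(z)$, consistent with $\Delta^{(12)}_{\lambda0}\equiv0$ and $\Delta_{00}\equiv1$. Setting $\lambda=0$ collapses it to a block lower-triangular matrix whose determinant is the $2\times2$ minor $(1+\mu b)(1+\mu f)-2\mu^2 e^2=\Delta_{0\mu}(z)$, again consistent with $\Delta^{(12)}_{0\mu}\equiv0$.

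This argument involves no genuine obstacle. The only delicate point will be the bookkeeping: keeping track of the factor $2$ in the third row of \eqref{system} together with the alternating cofactor signs, so that the mixed contributions combine into precisely the coefficients $4$, $-1$, and $-2$ appearing in \eqref{determ3}. The explicit integrals $a,\dots,f$ of \eqref{fa}--\eqref{ff} enter only as the matrix entries; in particular the relations of Remark \ref{Rabcd} are not needed for this lemma and would intervene only later, in the analysis of the zeros of $\Delta_{\lambda\mu}$.
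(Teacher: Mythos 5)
Your proposal is correct and is essentially the paper's own proof: the paper simply states that ``direct computation of the determinant gives the result,'' meaning precisely the expansion of the $3\times3$ coefficient matrix of the system \eqref{system}, which you carry out (correctly, including the factor of $2$ in the third row that produces the coefficients $4$, $-1$, $-2$ in \eqref{determ3}). Your added justification that the Fredholm determinant of $I-{B}^{\theta}_{\lambda\mu}(0,z)$ reduces to this finite matrix determinant, and the degenerate-case checks at $\lambda=0$ and $\mu=0$, are sound and merely make explicit what the paper leaves implicit.
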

\begin{proof}
Direct computation of the determinant gives the result.
\end{proof}

\begin{lemma}\label{asympt(abcdef)}
The functions $a(z), b(z),c(z), d(z),e(z)$ and $f(z)$ defined in
$\R\setminus[0,8]$ are real-valued and, moreover, strictly
increasing and  positive in $(-\infty, 0)$, strictly increasing and
negative in $(8,+\infty)$ and have following asymptotics:
\begin{align}\label{asimp_function0}
&\lim\limits_{z\nearrow0}a(z)=\frac{\pi-2}{2\pi},\quad
\lim\limits_{z\nearrow   0}b(z)=\frac{30\pi-92}{3\pi},
\quad \lim\limits_{z\nearrow   0}c(z)=\frac{2\pi-6}{\pi},\\
\label{asimp_functions8} &\lim\limits_{z\nearrow
0}d(z)=\frac{4-\pi}{2\sqrt{2}\pi},\quad \lim\limits_{z\nearrow
0}e(z)=\frac{20-6\pi}{3\sqrt{2}\pi},\quad \lim\limits_{z\nearrow
0}f(z)=\frac{4}{3\pi}
\end{align}
and
\begin{align}
\label{asimp_functions3} &\lim\limits_{z\searrow
8}a(z)=\frac{2-\pi}{2\pi},\quad \lim\limits_{z\searrow
8}b(z)=\frac{92-30\pi}{3\pi},
\quad\lim\limits_{z\searrow 8}c(z)=\frac{6-2\pi}{\pi},\\
\label{asimp_functions4} &\lim\limits_{z\searrow
8}d(z)=\frac{\pi-4}{2\sqrt{2}\pi},\quad \lim\limits_{z\searrow
8}e(z)=\frac{6\pi-20}{3\sqrt{2}\pi},\quad \lim\limits_{z\searrow
8}f(z)=-\frac{4}{3\pi}.
\end{align}
\end{lemma}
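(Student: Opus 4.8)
The plan is to analyze each of the six integrals $a(z),\dots,f(z)$ in \eqref{fa}--\eqref{ff} as explicit functions of the real spectral parameter $z\in\R\setminus[0,8]$. First I would record the elementary fact that $\cE_0(p)=2\epsilon(p)$ ranges over $[0,8]$, so that for $z<0$ the denominator $\cE_0(p)-z$ is strictly positive and bounded away from zero on all of $\T^2$, while for $z>8$ it is strictly negative and bounded away from zero. Since each of the six numerator functions is a nonnegative (for $a,b,f$) or sign-definite-after-squaring trigonometric polynomial, reality and positivity/negativity are immediate: for $z\in(-\infty,0)$ each integrand is manifestly nonnegative, giving $a,b,c,d,e,f>0$ once one checks that the numerators are genuine squares or products of same-signed factors (which they are, by the forms in \eqref{fa}--\eqref{ff}). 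For $z\in(8,+\infty)$ the denominator flips sign and one obtains negativity. Strict monotonicity then follows by differentiating under the integral sign: $\frac{d}{dz}\frac{1}{\cE_0(p)-z}=\frac{1}{(\cE_0(p)-z)^2}>0$, so each integral is strictly increasing in $z$ on each of the two intervals; differentiation under the integral is justified because the integrand and its $z$-derivative are continuous and uniformly bounded on $\T^2$ for $z$ in any compact subinterval of $\R\setminus[0,8]$.

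The substantive part is the computation of the boundary limits \eqref{asimp_function0}--\eqref{asimp_functions4}. Here I would use the monotone convergence theorem (or dominated convergence, since near $z=0^-$ the integrands increase monotonically to the $z=0$ integrand, which remains Lebesgue-integrable because the singularity of $1/\cE_0(p)$ at $p=0$ is integrable in two dimensions). This reduces each limit to an explicit integral over $\T^2$ of a ratio of trigonometric polynomials divided by $\cE_0(p)=2(2-\cos p_1-\cos p_2)$. For the upper threshold $z\searrow 8$, I would exploit the symmetry $p_i\mapsto p_i+\pi$, under which $\cos p_i\mapsto-\cos p_i$ and hence $\cE_0(p)\mapsto 8-\cE_0(p)$; tracking the effect of this shift on each numerator explains the sign patterns relating \eqref{asimp_functions3}--\eqref{asimp_functions4} to \eqref{asimp_function0}--\eqref{asimp_functions8} (the squared numerators $a,b,f$ are invariant, while the cross-terms $c,d,e$ pick up a sign, consistent with $c,d,e$ changing sign and $a,b,f$ only changing sign through the denominator).

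The main obstacle is the explicit evaluation of the resulting two-dimensional integrals at the thresholds, which produces the closed-form values involving $\pi$. I would carry these out by reducing to one-dimensional integrals: writing $\cE_0(p)=2(2-\cos p_1-\cos p_2)$ and integrating first in $p_2$ using the standard formula $\frac{1}{2\pi}\int_{-\pi}^{\pi}\frac{dp_2}{A-\cos p_2}=\frac{1}{\sqrt{A^2-1}}$ (valid for $A>1$), together with its derivatives with respect to $A$ to handle the $\cos p_2$, $\cos 2p_2$ and $\sin^2 p_2$ factors appearing in the numerators. After the $p_2$-integration one is left with a single integral in $p_1$ of an algebraic function of $\cos p_1$, which can be evaluated by a further standard reduction. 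Because these computations are lengthy but entirely mechanical, I would relegate the detailed evaluation to Appendix \ref{sec:append_A}, as the paper indeed indicates, and in the body of the lemma simply state the verified closed forms. A useful consistency check on all six values is to substitute the threshold limits into the three identities of Remark \ref{Rabcd} (with $z=0$ and $z=8$); each identity must hold numerically, which provides an independent verification of the six constants and catches any arithmetic slip.
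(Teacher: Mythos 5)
Your handling of the lower-threshold limits (dominated convergence plus reduction to one-dimensional integrals via $\int_\T \frac{dt}{A-\cos t}$-type formulas) is exactly the paper's method: Appendix \ref{sec:append_A} computes $\lim_{z\nearrow 0}a(z)$ this way and declares the remaining cases analogous. However, two structural claims that carry the rest of your argument are false. First, the numerators in \eqref{fc}--\eqref{fe} are \emph{not} squares or products of same-signed factors: for $c$, at $p=(3\pi/4,0)$ one factor equals $\tfrac{\sqrt2}{2}>0$ and the other equals $-1<0$. Equivalently, after dropping the cross terms (which vanish by oddness in $p_1$ or $p_2$), one finds that $c(z)$, $d(z)$, $e(z)$ are positive multiples of the integrals of $\sin^2p_1\cos p_1$, $\sin^2p_1\cos p_2$, $\sin^2p_1\cos p_1\cos p_2$ against $[\cE_0(p)-z]^{-1}$, and each of these densities changes sign on $\T^2$. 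Consequently neither positivity on $(-\infty,0)$ nor strict monotonicity of $c,d,e$ follows from "manifest nonnegativity" or from $\partial_z[\cE_0(p)-z]^{-1}>0$; one needs an extra argument, e.g.\ pairing $p_1$ with $\pi-p_1$ inside the integral, or the explicit one-dimensional evaluation.

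Second, and more seriously, your symmetry bookkeeping at $z\searrow 8$ contains two errors that cancel into apparent agreement with the claimed sign pattern. Under $p_i\mapsto p_i+\pi$ one has $\sin p_i\mapsto-\sin p_i$, $\cos p_i\mapsto-\cos p_i$, $\sin 2p_i\mapsto\sin 2p_i$, so the numerator of $e$ is \emph{invariant} (both of its factors are), while the numerators of $c$ and $d$ flip sign; and since the denominator transforms as $\cE_0(p)-z\mapsto-[\cE_0(p)-(8-z)]$, an invariant numerator gives $g(z)=-g(8-z)$ (opposite threshold limits, as for $a,b,e,f$), whereas a sign-flipping numerator gives $g(z)=+g(8-z)$ (\emph{equal} threshold limits, as for $c,d$). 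A numerator flip combined with the denominator flip cancels; it does not produce a sign change, contrary to your inference. Carried out correctly, your own symmetry argument yields $\lim_{z\searrow 8}c(z)=\lim_{z\nearrow 0}c(z)$ and likewise for $d$, which conflicts with the signs asserted in \eqref{asimp_functions3}--\eqref{asimp_functions4} and with the claim that $c,d$ are negative and increasing on $(8,+\infty)$. Your proposed cross-check via Remark \ref{Rabcd} detects the same conflict: the exact identity $b(z)+\sqrt2\,e(z)=(4-z)\,c(z)$, evaluated as $z\searrow 8$ with the stated limits of $b$ and $e$, forces $\lim_{z\searrow 8}c(z)=\frac{2\pi-6}{\pi}$, not $\frac{6-2\pi}{\pi}$. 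So as written your proposal does not prove the upper-threshold assertions; a correct execution of your shift argument (or of your consistency check) would instead have exposed sign discrepancies in the $c$ and $d$ entries of the statement itself, which your double error papers over.
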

The proof of Lemma \ref{asympt(abcdef)} is given in Appendix
\ref{sec:append_A}. \hfill $\square$

\begin{lemma}\label{lemm:asympt}
The function $\Delta_{\lambda\mu}(z)$ is real-valued on
$\R\setminus[0,8]$ and has the following asymptotics:
\begin{align}\label{asymp_determ}
&\Delta_{\lambda\mu}(z)=C^{-}(\lambda,\mu)+o(1),\,\,\text{as}\,\,
z\nearrow
0, \\
&\Delta_{\lambda\mu}(z)=C^{+}(\lambda,\mu)+o(1),\,\,\text{as}
\,\,z\searrow 8,
\end{align}
where the functions $C^\pm(\lambda,\mu)$ are given by
\eqref{abobel}.
\end{lemma}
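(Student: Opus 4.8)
The statement splits into the reality assertion and the two limit computations; all the analytic input is already contained in Lemmas \ref{determ_form} and \ref{asympt(abcdef)}, so the argument is essentially an exercise in substitution and factorization.

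First, for reality: Lemma \ref{asympt(abcdef)} asserts that $a,b,c,d,e,f$ are real-valued on $\R\setminus[0,8]$, and formulas \eqref{determ}--\eqref{determ3} display $\Delta_{\lambda\mu}(z)$ as a polynomial in these six functions with coefficients that are real polynomials in $\lambda,\mu$. Hence $\Delta_{\lambda\mu}(z)\in\R$ for every real $z\notin[0,8]$ and all $\lambda,\mu\in\R$.

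Next, because each of $a,\dots,f$ has a finite limit as $z\nearrow 0$ (by \eqref{asimp_function0}--\eqref{asimp_functions8}) and $\Delta_{\lambda\mu}$ is a fixed polynomial expression in them, the limit $\lim_{z\nearrow0}\Delta_{\lambda\mu}(z)$ exists, the error is automatically $o(1)$, and the constant term is obtained by inserting those limiting values into \eqref{determ}--\eqref{determ3}. I would then organize the resulting expression by powers of $\lambda$, using that $\lambda$ enters $\Delta_{\lambda\mu}$ only linearly. The $\lambda$-independent part is $\Delta_{0\mu}(0^-)=(1+\mu b_0)(1+\mu f_0)-2\mu^2e_0^2$ (subscript $0$ denoting the $z\nearrow0$ limit), a quadratic in $\mu$; a short computation shows its leading coefficient equals $8\kappa$ with $\kappa:=\tfrac{30\pi-3\pi^2-64}{6\pi^2}$ and that its two roots are exactly the numbers $\mu_0^\pm$ of \eqref{root01}. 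The coefficient of $\lambda$ equals $a_0\,\Delta_{0\mu}(0^-)+4\mu^2c_0d_0e_0-\mu c_0^2(1+\mu f_0)-2\mu d_0^2(1+\mu b_0)$, again quadratic in $\mu$, with leading coefficient $\kappa$ and roots $\mu_1^\pm$ of \eqref{root02}. Factoring $\kappa$ out of both pieces yields precisely the upper-sign instance of \eqref{abobel}, that is $C^-(\lambda,\mu)=\kappa\,[\,8(\mu-\mu_0^+)(\mu-\mu_0^-)+\lambda(\mu-\mu_1^+)(\mu-\mu_1^-)\,]$.

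For the upper threshold I would avoid repeating the computation and instead exploit the reflection symmetry $\Delta_{\lambda\mu}(z)=\Delta_{-\lambda,-\mu}(8-z)$, valid for all $z\in\C\setminus[0,8]$. This follows from \eqref{fa}--\eqref{ff} by the substitution $p\mapsto p+(\pi,\pi)$, under which $\cE_0(p)\mapsto 8-\cE_0(p)$: one checks that $a,b,e,f$ change sign while $c,d$ are unchanged, and then that \eqref{determ1}--\eqref{determ3} transform into their $(-\lambda,-\mu)$ counterparts evaluated at $8-z$ (the functions $c,d$ appear in \eqref{determ3} only through the product $cde$ and through $c^2,d^2$, so the symmetry is insensitive to their sign behavior). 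Letting $z\searrow 8$, so that $8-z\nearrow0$, gives $C^+(\lambda,\mu)=C^-(-\lambda,-\mu)$, and replacing $(\lambda,\mu)$ by $(-\lambda,-\mu)$ in the factored form of $C^-$ reproduces the lower-sign instance of \eqref{abobel}.

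The only genuine labor is the quadratic algebra in the third paragraph: expanding $\Delta_{0\mu}(0^-)$ and the $\lambda$-coefficient, confirming the common leading factor $\kappa$, and checking that the two discriminants collapse to $1044\pi^2-6720\pi+10816$ and $225\pi^4-1440\pi^3+3904\pi^2-10240\pi+16384$ as in \eqref{root01}--\eqref{root02}. I expect this bookkeeping, rather than any conceptual point, to be the main obstacle.
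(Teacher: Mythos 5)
Your proof is correct, and its first half coincides with what the paper does: the published proof of Lemma \ref{lemm:asympt} is a one-line ``immediate inspection,'' i.e.\ it substitutes the limiting values from Lemma \ref{asympt(abcdef)} into \eqref{determ}--\eqref{determ3} at \emph{both} thresholds and matches the outcome with \eqref{abobel} via \eqref{root01}--\eqref{root02}; your treatment of the reality statement and of the limit $z\nearrow 0$ (grouping by powers of $\lambda$, identifying the leading coefficients $8\kappa$ and $\kappa$, with your $\kappa=\frac{30\pi-3\pi^2-64}{6\pi^2}$, and the root pairs $\mu_0^\pm$, $\mu_1^\pm$) is exactly that computation written out, and your structural claims check out (e.g.\ the coefficient of $\lambda\mu^2$ does collapse to $\kappa$). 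Where you genuinely depart from the paper is at $z\searrow 8$: instead of redoing the substitution with the values \eqref{asimp_functions3}--\eqref{asimp_functions4}, you invoke the reflection $\Delta_{\lambda\mu}(z)=\Delta_{-\lambda,-\mu}(8-z)$ coming from $p\mapsto p+(\pi,\pi)$, together with the elementary identity $C^+(\lambda,\mu)=C^-(-\lambda,-\mu)$. This is not only shorter; it is actually more reliable than the paper's own data. Your sign bookkeeping ($a,b,e,f$ odd under $z\mapsto 8-z$, while $c,d$ are even) is correct, and it forces $\lim_{z\searrow 8}c(z)=\lim_{z\nearrow 0}c(z)=\frac{2\pi-6}{\pi}$ and likewise $\lim_{z\searrow 8}d(z)=\frac{4-\pi}{2\sqrt{2}\pi}$ --- in disagreement with the signs listed for $c,d$ in \eqref{asimp_functions3}--\eqref{asimp_functions4} and with the assertion of Lemma \ref{asympt(abcdef)} that all six functions are negative and increasing on $(8,+\infty)$; a direct evaluation of $c$ at both thresholds confirms your signs, not the paper's. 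As you yourself observe, this slip is invisible in the determinant, because $c$ and $d$ enter \eqref{determ3} only through $c^2$, $d^2$ and the product $cde$, so the paper's stated $C^+$ and hence the lemma remain correct; but your symmetry route both halves the labor and bypasses the one spot where the paper's auxiliary lemma is inaccurate, at the negligible cost of verifying $C^+(\lambda,\mu)=C^-(-\lambda,-\mu)$ directly from \eqref{abobel}.
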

\begin{proof}
In view of Lemmas \ref{determ_form} and \ref{asympt(abcdef)}, the
proof is obtained by an immediate inspection taking into account the
definitions \eqref{root01} and \eqref{root02}.
\end{proof}

\black Next we study the number and location of roots of the
functions $\Delta_{\lambda0}$ and $\Delta_{0\mu}$ defined by
\eqref{determ1} and \eqref{determ2}.

\begin{lemma}\label{simple1} Let $\lambda\in\mathbb{R}.$
\begin{itemize}
\item[(\rm{i})]If
$\lambda<\frac{2\pi}{2-\pi},$ then $\Delta_{\lambda0}(\cdot)$ has a
unique root $\zeta^-(\lambda,0)$ in  $(-\infty,0)$, and it has no
roots in  $(8,+\infty)$.

\item[(\rm{ii})] If $\lambda\in[\frac{2\pi}{2-\pi},
\frac{2\pi}{\pi-2}],$ then $\Delta_{\lambda0}(\cdot)$  has no roots
in $\R\setminus [0,8]$.

\item[(\rm{iii})] If $\lambda>\frac{2\pi}{\pi-2},$
then $\Delta_{\lambda0}(\cdot)$  has a unique root
$\zeta^+(\lambda,0)$ in $(8,+\infty)$, and it has no roots in
$(-\infty, 0)$.
\end{itemize}
\end{lemma}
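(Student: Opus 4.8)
The plan is to reduce the root-counting for $\Delta_{\lambda0}(z)=1+\lambda a(z)$ to an elementary monotonicity argument for the single scalar function $a(z)$ supplied by Lemma~\ref{asympt(abcdef)}. First observe that for $\lambda=0$ one has $\Delta_{00}(z)\equiv 1$, which has no zeros, so case (ii) holds trivially there. For $\lambda\neq 0$ the equation $\Delta_{\lambda0}(z)=0$ is equivalent to
\begin{equation*}
a(z)=-\frac{1}{\lambda},
\end{equation*}
so it suffices to decide, on each of the intervals $(-\infty,0)$ and $(8,+\infty)$ separately, precisely when the horizontal level $-1/\lambda$ is attained by $a$.

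Next I would pin down the exact range of $a$ on each interval. By Lemma~\ref{asympt(abcdef)}, $a$ is real-valued, strictly increasing and positive on $(-\infty,0)$ with $\lim_{z\nearrow 0}a(z)=\frac{\pi-2}{2\pi}$; combined with the fact that $a(z)\to 0$ as $z\to-\infty$, read off from the integral representation \eqref{fa}, this shows $a$ maps $(-\infty,0)$ strictly monotonically onto the open interval $\bigl(0,\tfrac{\pi-2}{2\pi}\bigr)$. Symmetrically, on $(8,+\infty)$ the function $a$ is strictly increasing and negative with $\lim_{z\searrow 8}a(z)=\frac{2-\pi}{2\pi}$ and $a(z)\to 0$ as $z\to+\infty$, so it maps $(8,+\infty)$ onto $\bigl(\tfrac{2-\pi}{2\pi},0\bigr)$. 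Since $a$ is continuous and strictly monotone on each interval, the intermediate value theorem guarantees that $a(z)=-1/\lambda$ has a solution there if and only if $-1/\lambda$ lies in the corresponding open range, in which case the solution is unique.

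It then remains to convert the membership conditions into inequalities for $\lambda$. For a root in $(-\infty,0)$, the requirement $-1/\lambda\in\bigl(0,\tfrac{\pi-2}{2\pi}\bigr)$ forces first $\lambda<0$ and then $-\lambda>\frac{2\pi}{\pi-2}$, i.e. $\lambda<\frac{2\pi}{2-\pi}$; this produces the unique root $\zeta^-(\lambda,0)$ of case (i), and for such $\lambda$ the positive level $-1/\lambda$ cannot meet the negative range on $(8,+\infty)$, so no root arises there. Symmetrically, $-1/\lambda\in\bigl(\tfrac{2-\pi}{2\pi},0\bigr)$ is equivalent to $\lambda>\frac{2\pi}{\pi-2}$, giving the unique root $\zeta^+(\lambda,0)$ of case (iii) and no root in $(-\infty,0)$. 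Finally, for $\lambda\in[\frac{2\pi}{2-\pi},\frac{2\pi}{\pi-2}]$ the value $-1/\lambda$ falls outside both open ranges (at the two endpoints it equals $\frac{\pi-2}{2\pi}$ or $\frac{2-\pi}{2\pi}$, neither of which is attained since $z=0$ and $z=8$ are excluded), so $\Delta_{\lambda0}$ has no zeros in $\R\setminus[0,8]$, which is case (ii).

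The only point needing care beyond the cited lemma is the behaviour of $a$ at $z\to\pm\infty$: Lemma~\ref{asympt(abcdef)} delivers monotonicity, sign, and the threshold limits, but the vanishing of $a$ at infinity must be extracted from \eqref{fa} directly. This is routine, as the numerator $(\sin p_1\pm\sin p_2)^2$ is bounded while $\cE_0(p)-z$ diverges uniformly in $p$; hence I anticipate no genuine obstacle, the whole argument being a bookkeeping of one monotone scalar function against a single horizontal level.
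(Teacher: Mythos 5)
Your proposal is correct and follows essentially the same route as the paper, which proves this lemma by combining the representation $\Delta_{\lambda0}(z)=1+\lambda a(z)$ from Lemma~\ref{determ_form} with the monotonicity, sign, and threshold asymptotics of $a$ from Lemma~\ref{asympt(abcdef)}; your write-up simply fills in the details (the reduction to the level equation $a(z)=-1/\lambda$, the decay $a(z)\to 0$ as $z\to\pm\infty$, and the endpoint check) that the paper leaves implicit.
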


\begin{lemma}\label{simple2}
Let $\mu \in \R$  and let $\mu_0^-$ be as in \eqref{root01}. Then:
\begin{itemize}
\item[(\rm{i})] For any $\mu<\mu_{0}^-$ the functions
$1+\mu b(z)$ and $1+\mu f(z)$ have unique roots $\eta^{(-)}_1(\mu)$
and $\eta^{(-)}_2(\mu)$  in $(-\infty,0),$ respectively.

\item[(\rm{ii})] For any $\mu>-\mu_{0}^-$
the functions $1+\mu b(z)$ and $1+\mu f(z)$ have unique roots
$\eta^{(+)}_1(\mu)$ and $\eta^{(+)}_2(\mu)$  in $(8,+\infty),$
respectively.
\end{itemize}
\end{lemma}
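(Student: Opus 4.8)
The plan is to reduce each of the two assertions to an elementary one-variable analysis of a single monotone function on a half-line, feeding on Lemma~\ref{asympt(abcdef)}. First I would record the qualitative picture of $b$ and $f$ on the two intervals $(-\infty,0)$ and $(8,+\infty)$: by Lemma~\ref{asympt(abcdef)} both are continuous and strictly increasing there, and from the integral representations \eqref{fb} and \eqref{ff} one checks at once that $b(z),f(z)\to 0$ as $z\to\pm\infty$ (the integrands keep a fixed sign and are dominated by $C/|z|$). Hence on $(-\infty,0)$ each of $b,f$ increases from $0$ up to its positive edge value $b(0^-)=\frac{30\pi-92}{3\pi}$, $f(0^-)=\frac{4}{3\pi}$, while on $(8,+\infty)$ it increases from the negative edge value $b(8^+)=-b(0^-)$, $f(8^+)=-f(0^-)$ back up to $0$.

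For part (i) I would fix $\mu<\mu_0^-$, note that $\mu_0^-<0$ forces $\mu<0$, and observe that $1+\mu b(\cdot)$ and $1+\mu f(\cdot)$ are then continuous and strictly decreasing on $(-\infty,0)$ with common left limit $1>0$. By the intermediate value theorem and strict monotonicity, each has a root in $(-\infty,0)$, necessarily unique, precisely when its edge value is negative, i.e. when $1+\mu\,b(0^-)<0$ and $1+\mu\,f(0^-)<0$. Since $b(0^-),f(0^-)>0$, these read $\mu<-1/b(0^-)=-\frac{3\pi}{30\pi-92}$ and $\mu<-1/f(0^-)=-\frac{3\pi}{4}$. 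I would then close (i) by checking that $\mu_0^-$ lies below both thresholds, which produces and labels the roots $\eta^{(-)}_1(\mu)$ of $1+\mu b$ and $\eta^{(-)}_2(\mu)$ of $1+\mu f$.

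Part (ii) I would handle by the mirror argument on $(8,+\infty)$: for $\mu>-\mu_0^-$ one has $\mu>0$, so $1+\mu b(\cdot)$ and $1+\mu f(\cdot)$ are strictly increasing with right limit $1>0$ at $+\infty$, and a unique root in $(8,+\infty)$ exists exactly when the values at $8^+$ are negative, namely $1+\mu b(8^+)<0$ and $1+\mu f(8^+)<0$. Using $b(8^+)=-b(0^-)$ and $f(8^+)=-f(0^-)$, these become $\mu>\frac{3\pi}{30\pi-92}$ and $\mu>\frac{3\pi}{4}$, so the hypothesis $\mu>-\mu_0^-$ again suffices once the same numerical comparison with $\mu_0^-$ is in force. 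Thus both statements rest on one identical pair of inequalities.

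The only genuinely non-formal ingredient, and the step I expect to be the main obstacle, is to confirm the chain
\begin{equation*}
\mu_0^-<-\frac{3\pi}{30\pi-92}<-\frac{3\pi}{4},
\end{equation*}
which places $\mu_0^-$ strictly below the separate critical values of $b$ and of $f$ (the binding constraint being the first inequality). I would settle this by a direct computation from the closed form \eqref{root01}; as a cross-check the decimals give $\mu_0^-=-5.6172\ldots$ against $-\frac{3\pi}{30\pi-92}=-4.193\ldots$ and $-\frac{3\pi}{4}=-2.356\ldots$, so both hold with room to spare. I would stress that these conditions are sufficient but not sharp for the individual factors; the particular value $\mu_0^-$ is dictated by its later role as a root of the edge determinant $\Delta_{0\mu}$, i.e. of $C^-(0,\mu)$ in \eqref{abobel}.
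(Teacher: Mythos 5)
Your proposal is correct and follows essentially the same route as the paper, whose own (very brief) proof likewise rests on the monotonicity and edge asymptotics of $b(z)$ and $f(z)$ from Lemma~\ref{asympt(abcdef)} applied to the factors $1+\mu b(z)$ and $1+\mu f(z)$ in the representation \eqref{determ2}. Your additional explicit steps — the decay $b(z),f(z)\to 0$ as $z\to\pm\infty$ and the numerical chain $\mu_0^-<-\tfrac{3\pi}{30\pi-92}<-\tfrac{3\pi}{4}$ — are exactly the details the paper leaves implicit, and they check out.
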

\begin{proof}
Both Lemmas \ref{simple1} and \ref{simple2} are proven by using the
representations \eqref{determ1} and \eqref{determ2} from Lemma
\ref{determ_form} and the asymptotical
formulae for $a(z)$, $b(z)$ and $f(z)$ from Lemma \ref{asympt(abcdef)}.%
\end{proof}

Let $\eta^{(-)}_1(\mu)$, $\eta^{(-)}_2(\mu)$ and
$\eta^{(+)}_1(\mu)$, $ \eta^{(+)}_2(\mu)$ be  two different pairs of
roots of $\Delta_{0\mu}$ such that
$\eta^{(-)}_1(\mu),\,\eta^{(-)}_2(\mu)\in (-\infty,0)$ and
$\eta^{(+)}_1(\mu),  \eta^{(+)}_2(\mu)\in(8,+\infty )$. We set:
\begin{align*}
&\eta^{-}_{\min}(\mu)=\min\{\eta^{(-)}_1(\mu),\eta^{(-)}_2(\mu)\},\\
&\eta^{-}_{\max}(\mu)=\max\{\eta^{(-)}_1(\mu),\eta^{(-)}_2(\mu)\}.
\end{align*}
and
\begin{align*}
&\eta^{+}_{\min}(\mu)=\min\{\eta^{(+)}_1(\mu),\eta^{(+)}_2(\mu)\},\\
&\eta^{+}_{\max}(\mu)=\max\{\eta^{(+)}_1(\mu),\eta^{(+)}_2(\mu)\}.
\end{align*}

The next lemma describes the dependence of the number of roots of
the function $\Delta_{0\mu}$ in $\mathbb{R}\setminus[0,8]$ and their
location on the magnitude $\mu\in\R$.

\begin{lemma}\label{simple3} Let $\mu\in\mathbb{R}$ and
let the numbers $\mu_0^-$, $\mu_0^+$ be as in \eqref{root01} and
satisfy inequalities \eqref{roots}.

\begin{itemize}
\item[(\rm{i})] If $\mu<\mu_{0}^-$ then
$\Delta_{0\mu}(\cdot)$  has exactly two roots $\zeta_{1}^-(0,\mu)$
and $\zeta_2^-(0,\mu)$ satisfying
\begin{equation}\label{main310}
\zeta_1^-(0,\mu)<\eta_{\min}^-(\mu)\leq\eta_{\max}^-(\mu)<\zeta_2^-(0,\mu)
<0.
\end{equation}

\item[(\rm{ii})]
If $\mu\in[\mu_{0}^-,\mu_{0}^+)$ then $\Delta_{0\mu}(\cdot)$  has a
unique root $\zeta^-(0,\mu)$ in the interval $(-\infty,0)$, and  it
has no roots in the interval $(8,+\infty)$.

\item[(\rm{iii})] If $\mu\in[\mu_{0}^+,-\mu_{0}^+]$ then
$\Delta_{0\mu}(\cdot)$  has no  roots in $\R\setminus [0,8]$.

\item[(\rm{iv})]
If $\mu\in(-\mu_{0}^+,-\mu_{0}^-]$ then $\Delta_{0\mu}(\cdot)$  has
a unique root $\zeta^+(0,\mu)$ in the interval $(8,+\infty)$, and it
has no roots in the interval $(-\infty,0)$.
\item[(\rm{v})]
\underline{} If $\mu>-\mu_{0}^-$ then $\Delta_{0\mu}(\cdot)$ has
exactly two roots $\zeta_1^+(0,\mu)$ and $\zeta_2^+(0,\mu)$
satisfying
\begin{equation}\label{main3.11}
8<\zeta_2^+(0,\mu)<\eta_{\min}^+(\mu)\leq\eta_{\max}^+(\mu)<\zeta_1^+(0,\mu).
\end{equation}
\end{itemize}
\end{lemma}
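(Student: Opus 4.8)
The plan is to reduce the counting of roots of $\Delta_{0\mu}$ to a spectral problem for an auxiliary $2\times2$ matrix and to extract from it a monotonicity that caps the number of roots in each of $(-\infty,0)$ and $(8,+\infty)$ at two. By \eqref{determ2},
\[
\Delta_{0\mu}(z)=\det\begin{pmatrix} 1+\mu b(z) & \mu\sqrt2\,e(z)\\ \mu\sqrt2\,e(z) & 1+\mu f(z)\end{pmatrix}=\det\bigl(I+\mu M(z)\bigr),\qquad M(z):=\begin{pmatrix} b(z) & \sqrt2\,e(z)\\ \sqrt2\,e(z) & f(z)\end{pmatrix}.
\]
First I would record, by a direct check against \eqref{fb}, \eqref{fe} and \eqref{ff}, the integral representation
\[
M(z)=\frac1{8\pi^2}\int_{\T^2}\frac{v(p)\,v(p)^{\top}}{\cE_0(p)-z}\,\mathrm{d}p,\qquad v(p)=\bigl(\sin2p_1+\sin2p_2,\ 2(\sin p_1\cos p_2+\sin p_2\cos p_1)\bigr)^{\top}.
\]
This makes $M(z)\succeq0$ on $(-\infty,0)$ and $M(z)\preceq0$ on $(8,+\infty)$, while on both intervals $M'(z)=\frac1{8\pi^2}\int_{\T^2}(\cE_0(p)-z)^{-2}v(p)v(p)^{\top}\,\mathrm{d}p\succ0$ because $v_1$ and $v_2$ are linearly independent functions on $\T^2$. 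By the min--max principle the eigenvalues $\lambda_1(z)\ge\lambda_2(z)$ of $M(z)$ are then continuous and strictly increasing in $z$ on each interval.

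Writing $\Delta_{0\mu}(z)=(1+\mu\lambda_1(z))(1+\mu\lambda_2(z))$, a point $z$ is a root precisely when $\lambda_i(z)=-1/\mu$ for some $i$, so strict monotonicity of $\lambda_1,\lambda_2$ immediately yields \emph{at most two} roots in each interval. On $(-\infty,0)$ one has $\lambda_i(z)\to0^+$ as $z\to-\infty$ and $\lambda_i(z)\nearrow\lambda_i(0^-)>0$; hence for $\mu\ge0$ there is no root, while for $\mu<0$ the equation $\lambda_i(z)=1/|\mu|$ is solvable iff $1/|\mu|<\lambda_i(0^-)$. To pin the thresholds I would use Lemma \ref{lemm:asympt}: since $\Delta_{0\mu}(0^-)=C^-(0,\mu)$ and, by \eqref{abobel}, $C^-(0,\cdot)$ vanishes exactly at $\mu_0^{\pm}$, it follows that $\{-1/\lambda_1(0^-),-1/\lambda_2(0^-)\}=\{\mu_0^+,\mu_0^-\}$, and the ordering $\mu_0^-<\mu_0^+<0$ forces $\lambda_1(0^-)=-1/\mu_0^+$, $\lambda_2(0^-)=-1/\mu_0^-$. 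Thus both equations are solvable iff $\mu<\mu_0^-$ (two roots, (i)), only the $\lambda_1$-equation is solvable iff $\mu_0^-\le\mu<\mu_0^+$ (one root, (ii)), and none is solvable for $\mu\ge\mu_0^+$. The identical analysis on $(8,+\infty)$, where $M$ is now negative semidefinite, $\Delta_{0\mu}(8^+)=C^+(0,\mu)$ has zeros $-\mu_0^{\pm}$, and a root requires $\mu>0$, gives two roots for $\mu>-\mu_0^-$ ((v)), one root for $-\mu_0^+<\mu\le-\mu_0^-$ ((iv)), and none for $\mu\le-\mu_0^+$. Intersecting the two ``no-root'' ranges produces case (iii), $\mu\in[\mu_0^+,-\mu_0^+]$.

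Finally I would establish the interlacing bounds \eqref{main310} and \eqref{main3.11}. For $\mu<\mu_0^-$, Lemma \ref{simple2}(i) supplies the roots $\eta^{(-)}_1(\mu),\eta^{(-)}_2(\mu)\in(-\infty,0)$ of the two factors $1+\mu b$ and $1+\mu f$ of \eqref{determ2}; at either of them one factor vanishes, so
\[
\Delta_{0\mu}\bigl(\eta^{(-)}_i(\mu)\bigr)=-2\mu^2e^2\bigl(\eta^{(-)}_i(\mu)\bigr)<0,
\]
the strict sign coming from $e>0$ on $(-\infty,0)$ (Lemma \ref{asympt(abcdef)}). As $\Delta_{0\mu}\to1>0$ at $-\infty$ and $\Delta_{0\mu}(0^-)=C^-(0,\mu)>0$ for $\mu<\mu_0^-$, and since there are exactly two roots, both must lie outside $[\eta^-_{\min},\eta^-_{\max}]$, one on each side; this is \eqref{main310}. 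The same reasoning on $(8,+\infty)$, using $e^2>0$ and $C^+(0,\mu)>0$ for $\mu>-\mu_0^-$, yields \eqref{main3.11}. The step I expect to require the most care is the strict positive definiteness of $M'(z)$, that is, the non-degeneracy of the pair $(v_1,v_2)$: it underlies the strict monotonicity of the eigenvalues and hence the sharp ``at most two roots'' bound on which the whole count rests; once this is in place, the remainder is sign bookkeeping organised by $\mu_0^{\pm}$.
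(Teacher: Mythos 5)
Your proposal is correct, and its core mechanism differs from the paper's. The paper proves item (i) by pure intermediate-value bookkeeping: it evaluates $\Delta_{0\mu}$ at the points $\eta^{-}_{\min}(\mu),\eta^{-}_{\max}(\mu)$ supplied by Lemma \ref{simple2} (where $\Delta_{0\mu}=-2\mu^2e^2<0$), compares with the positive limits at $-\infty$ and at $0^-$ (via $C^-(0,\mu)>0$), and then caps the count at two by invoking that $V^{\theta}_{0\mu}$ has rank two, so $\Delta_{0\mu}$ cannot have more than two roots off $[0,8]$; the remaining items are declared ``similar''. You reproduce that IVT step verbatim for the interlacing \eqref{main310}, \eqref{main3.11}, but you replace both the rank-two cap and the case-by-case sign analysis by a single structural device: writing $\Delta_{0\mu}=\det\bigl(I+\mu M(z)\bigr)$ with $M(z)=\frac1{8\pi^2}\int_{\T^2}(\cE_0(p)-z)^{-1}v(p)v(p)^{\top}\mathrm{d}p$, so that the two eigenvalue branches $\lambda_1(z)\ge\lambda_2(z)$ are strictly monotone (by $M'(z)\succ0$, which your linear-independence check of $v_1=2\sin(p_1+p_2)\cos(p_1-p_2)$ and $v_2=2\sin(p_1+p_2)$ does establish), each branch meets the level $-1/\mu$ at most once per interval, and the exact thresholds fall out of the identification $\{-1/\lambda_1(0^-),-1/\lambda_2(0^-)\}=\{\mu_0^+,\mu_0^-\}$ forced by the polynomial identity $\det(I+\mu M(0^-))=C^-(0,\mu)$. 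What this buys is a uniform derivation of all five items (including the boundary cases $\mu=\mu_0^-$, $\mu=-\mu_0^-$ and the ``no roots on the other side'' assertions, which in the paper rest on sign-definiteness of the rank-two perturbation), at the cost of the extra matrix-Herglotz machinery. One step you pass over silently: the identification of $-1/\lambda_i(0^-)$ with $\mu_0^{\pm}$ presupposes $\lambda_2(0^-)>0$, equivalently $\det M(0^-)\neq0$; this is immediate, though, because $C^-(0,\cdot)$ is a genuine quadratic in $\mu$ (its leading coefficient $\tfrac{4(30\pi-3\pi^2-64)}{3\pi^2}$ is nonzero), so the polynomial identity forces $\det M(0^-)\neq 0$. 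With that one-line patch the argument is complete.
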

\black

\begin{proof} Let us prove the item (i).
Combining the hypothesis of this item with \eqref{roots}  yields

\begin{align*}
&C^{-}(0,\mu)=\frac{120\pi-12\pi^2-256}{3\pi^2}(\mu-\mu_0^+)(\mu-\mu_0^-)
>0.
\end{align*}
Therefore by Lemma \ref{lemm:asympt}
\begin{align*}
&\lim\limits_{z\nearrow 0} \Delta_{0\mu}(z)>0.
\end{align*}
Moreover
\begin{equation*}
\lim\limits_{z\to -\infty }\Delta_{0\mu}(z)=1.
\end{equation*}
Lemma \ref{simple2} implies that
\begin{align*}
&\Delta_{0\mu}(\eta^{-}_{\min}(\mu))<0
\end{align*}
and
\begin{align*}
&\Delta_{0\mu}(\eta^{-}_{\max}(\mu))<0.
\end{align*}
Hence there exists two numbers
$\zeta_1^-(0,\mu)\in(-\infty,\eta_{\min}(\mu))$ and
$\zeta_2^-(0,\mu)\in(\eta_{\max}(\mu),0)$ such that
\begin{align*}
&\Delta_{0\mu}(\zeta_1^-(0,\mu))=0\,\,\, \mathrm{and} \,\,\,
\Delta_{0\mu}(\zeta_2^-(0,\mu))=0.
\end{align*}
Since ${V}_{0\mu}^{\theta},\,\theta\in\{\mathrm{os},\mathrm{oa}\}$
is operator of rank two the determinant $\Delta_{0\mu}(z)$ has no
more than two roots in $\R\setminus [0,8]$, which completes the
proof of item (i).

The remaining  items are proven in a similar way.
\end{proof}

\section{Proofs of the main  results}\label{sec:proofs}

In this section we prove our main results, Theorems
\ref{teo:discr_Kne0} --\ref{teo:disc_quasi0}. We will see below that
Theorems \ref{teo:discr_Kne0} and \ref{teo:disc_quasiK}  are rather
corollaries of Theorems \ref{teo:constant} and
\ref{teo:disc_quasi0}. Thus, we start with a proof of the two last
ones.
\smallskip

\textit{Proof of Theorem \ref{teo:constant}}. Since for any
$(\lambda,\mu)\in\cC$ the determinant $\Delta_{\lambda\mu}(z)$ is
real analytic in $z \in \{\Re \,z<0\}$ and the equalities
\begin{align*}
&\lim\limits_{z\to-\infty} \Delta_{\lambda\mu}(z)=1,\quad
\lim\limits_{z\nearrow 0} \Delta_{\lambda\mu}(z)<0
\end{align*} are hold there exist negative numbers $B_1<B_2<0$ such
that the function has only finite number roots in $(B_1,B_2)$.

Let $(\lambda_0,\mu_0)$ be a point of $\cC$ and $z_0<0$ be a root of
multiplicity $m\geq 1$ of the function $\Delta_{\lambda_0\mu_0}(z)$.
For each fixed $z<0$ the determinant $\Delta_{\lambda\mu}(z)$ is a
real analytic function in $(\lambda,\mu)\in\cC$ and for each
$\lambda,\mu\in \R$ the function $\Delta_{\lambda\mu}(z)$ is real
analytic in $z\in(-\infty,0)$. Hence, for each $\varepsilon>0$ there
are numbers $\delta>0$, $\eta> 0$ and an open neighborhood
$W_{\eta}(z_0)$ of $z_0$ with radius $\eta$ such that for all $z\in
\overline{W_{\eta}(z_0)}$ and $(\lambda,\mu )\in\cC$ obeying the
conditions $|z-z_0|=\eta$ and
$||(\lambda,\mu)-(\lambda_0,\mu_0)||<\delta$ the following two
inequalities $|\Delta_{\lambda_0\mu_0}(z)|>\eta$ and
$|\Delta_{\lambda\mu}(z)-\Delta_{\lambda_0\mu_0}(z)|<\epsilon $
hold. Then by Rouch\'e's theorem the number of roots of the function
$\Delta_{\lambda\mu}(z)$ in ${W_{\eta}(z_0)}$ remains constant  for
all $(\lambda,\mu)\in\cC$ satisfying
$||(\lambda,\mu)-(\lambda_0,\mu_0)||<\delta $. Since the root
$z_0<0$ of the function $\Delta_{\lambda\mu}(z)$ is arbitrary in
$(B_1,B_2)$ we conclude that the number of its roots  remains
constant in $(B_1,B_2)$ for all $(\lambda,\mu)\in\cC$ satisfying
$||(\lambda,\mu)-(\lambda_0,\mu_0)||<\delta$.

Further each Jordan curve $\gamma\subset\cC$ connecting any two
points of $\cC$ is a {\it compact set}, so the number of roots of
the function $\Delta_{\lambda\mu}(z)$ lying below zero for any
$(\lambda,\mu)\in \gamma$ remains constant. Therefore, Lemma
\ref{eigen-zeros} yields that the number of eigenvalues
$n_-\bigl(H_{\lambda\mu}(0)\bigr)$ of the operator
${H}_{\lambda\mu}(0)$ below the essential spectrum  is constant.

The proof in the case of $n_+\bigl(H_{\lambda\mu}(0)\bigr)$ is done
in the same way. \hfill $\square$
\smallskip

\noindent\textit{Proof of Theorem \ref{teo:disc_quasi0}}. We only
prove items (i), (ii) and (v). The remaining items can be proven
similarly.

(\rm{i}). Assume that $(\lambda,\mu)\in \cC_{30}=\cC_{3}^-$. Lemma
\ref{simple3} yields that for $\mu<\mu_0^-$ the function
$\Delta_{0\mu}(z)$ has exactly two roots $\zeta_{1}^-(0,\mu)$ and
$\zeta_2^-(0,\mu)$ satisfying the relation \eqref{main310}.

Since $\mu<0$, the functions $1+\mu b(z)$ and $1+\mu f(z)$ are
continuous and monotonously decreasing in $(-\infty,0)$. Obviously,
by Lemma \ref{simple2}\,(i) we have
\begin{align*}
&1+\mu b(z)>0,\,1+\mu
f(z)>0\,\,\mathrm{if}\,\,z<\eta_{\min}^{-}(\mu),\\
&1+\mu b(z)<0,\,1+\mu
f(z)<0\,\,\mathrm{if}\,\,z>\eta_{\max}^{-}(\mu).
\end{align*}
For the roots $\zeta_{1}^-(0,\mu)$ and $\zeta_2^-(0,\mu)$ of the
equation $\Delta_{0\mu}(z)=(1+\mu b(z))(1+\mu f(z))-2\mu^2e^2(z)=0$
we then have

\begin{equation}\label{equalityroot1}
\sqrt{1+\mu f(\zeta_1^-(0,\mu))}\sqrt{1+\mu
b(\zeta_1^-(0,\mu))}=-\sqrt{2}\mu e(\zeta_1^-(0,\mu))
\end{equation}
and
\begin{equation}\label{equalityroot2}
\sqrt{-(1+\mu f(\zeta_2^-(0,\mu)))}\sqrt{-(1+\mu
b(\zeta_2^-(0,\mu)))}=-\sqrt{2}\mu e(\zeta_2^-(0,\mu)).
\end{equation}

In view of $C^{-}(\lambda,\mu)<0$, Lemma \ref{lemm:asympt} implies
\begin{align*}
&\lim\limits_{z\nearrow 0}
\Delta_{\lambda\mu}(z)<0\,\,\mathrm{and}\,\,\lim\limits_{z\to-\infty}
\Delta_{\lambda\mu}(z)  =1.
\end{align*}
Using the explicit representation \eqref{determ} for
$\Delta_{\lambda\mu}$ and the identity \eqref{equalityroot1} one
arrives with the following chain of equalities:
\begin{align}\label{good1}
&\Delta_{\lambda\mu}(\zeta_1^-(0,\mu)) \\
&=\Delta_{\lambda0}(\zeta_1^-(0,\mu))\Delta_{0\mu}(\zeta_1^-(0,\mu))
+\Delta^{(12)}_{\lambda\mu}(\zeta_1^-(0,\mu))\nonumber \\
&=4\lambda\mu^2c(\zeta_1^-(0,\mu))d(\zeta_1^-(0,\mu))e(\zeta_1^-(0,\mu)) \nonumber\\
&-\lambda\mu c^2(\zeta_1^-(0,\mu))(1+\mu f(\zeta_1^-(0,\mu)))\nonumber \\
&-2\lambda\mu d^2(\zeta_1^-(0,\mu))(1+\mu b(\zeta_1^-(0,\mu)))\nonumber\\
&=-\lambda\mu \left(c(\zeta_1^-(0,\mu))\sqrt{1+\mu
f(\zeta_1^-(0,\mu))}+d(\zeta_1^-(0,\mu))\sqrt{2(1+\mu
b(\zeta_1^-(0,\mu)))}\right)^2 \nonumber.
\end{align}

Clearly, \eqref{good1} implies
\begin{equation}\label{ineq1}
\Delta_{\lambda\mu}(\zeta_1^-(0,\mu))<0.
\end{equation}
Analogously, using the identity \eqref{equalityroot2} one obtains
\begin{align}\label{ineq2}
&\Delta_{\lambda\mu}(\zeta_2^-(0,\mu))>0.
\end{align}
Inequalities \eqref{ineq1} and \eqref{ineq2} together with the
relations

\begin{align*}
&\lim\limits_{z\to-\infty} \Delta_{\lambda\mu}(z)  =1,
\lim\limits_{z\nearrow 0} \Delta_{\lambda\mu}(z)<0.
\end{align*}
yield the existence three roots $z_1$, $z_2$ and $z_3$ of function
$\Delta_{\lambda\mu}(z).$

Hence the function $\Delta_{\lambda\mu}(z)$ has three single roots
smaller than $0$. From Lemma \ref{eigen-zeros} it follows that the
operator $H_{\lambda\mu}(0)$ has six eigenvalues (counting
multiplicities) below the essential spectrum. Since the interaction
operator $V_{\lambda\mu}$ has rank at most six, $H_{\lambda\mu}(0)$
has no eigenvalues above its essential spectrum.

(\rm{ii}) The hypothesis of the theorem implies that
$(\lambda,\mu)\in \cC_{20}=\cC_{2}^{-}\cap \cC_{0}^{+}$, and, hence
$\mu<\mu_1^+$, i.e., $\mu<\mu_0^-$ or $\mu_0^-\leq\mu<\mu_0^+$, and
hence the items (i) and (ii) of Lemma \ref{simple3} yield that the
function $\Delta_{0\mu}(z)$ has at least one root in the interval
$(-\infty,0)$, which we denote by $\zeta^-(0,\mu)$.

It can be shown as \eqref{ineq1}  the inequality
$\Delta_{\lambda\mu}(\zeta^-(0,\mu))<0$ holds.

The definitions of the sets  $\cC_{2}^{-}$  yields the inequalities
\begin{align*}
C^{-}(\lambda,\mu)>0
\end{align*}
and hence Lemma \ref{lemm:asympt} gives
\begin{align*}
&\lim\limits_{z\nearrow 0} \Delta_{\lambda\mu}(z)>0.
\end{align*}
By definition of $\Delta_{\lambda\mu}(z)$ we have
\begin{align*}
\lim\limits_{z\to\pm\infty} \Delta_{\lambda\mu}(z)  =1.
\end{align*}

The relations above obtained relations yields
\begin{align*}
&\lim\limits_{z\to-\infty} \Delta_{\lambda\mu}(z)
=1,\,\,\,\Delta_{\lambda\mu}(\zeta^-(0,\mu))<0,\,\,\,\lim\limits_{z\nearrow
0} \Delta_{\lambda\mu}(z)>0
\end{align*}
and hence the function $\Delta_{\lambda\mu}(z)$ has only two roots
$z_1$ and $z_2$  of satisfying
\begin{equation}\label{main001c}
z_1(\lambda,\mu;0)<\zeta^-(0,\mu)<z_2(\lambda,\mu;0)<0.
\end{equation}
Otherwise it would have at least four roots in $(-\infty,0)$, but
this is impossible.

\black Now we show that the function has no roots in $(8,+\infty)$.

By definitions of $\cC^{+}$ and $\Delta_{\lambda\mu}(z)$ we have
$C^{+}(\lambda,\mu)>0$ and
\begin{align*}
&\lim\limits_{z\searrow 8} \Delta_{\lambda\mu}(z)>0,\quad
\lim\limits_{z\to+\infty} \Delta_{\lambda\mu}(z)  =1,
\end{align*}
which yields that $\Delta_{\lambda\mu}(z)$ has no roots in
$(8,+\infty).$ Otherwise it would have at least two roots in
$(8,+\infty)$.

Hence, Lemma \ref{eigen-zeros} implies that the operator
$H_{\lambda\mu}^\theta,\,\theta\in\{\mathrm{os},\mathrm{oa}\}$ has
no eigenvalues above the essential spectrum.

 (\rm{v}) Assume $(\lambda,\mu)\in \cC_{10}=\cC_{1}^-\cap
\cC_{0}^+$. By Lemma \ref{simple1}, for any
$\lambda<\frac{2\pi}{2-\pi}$ the operator $H_{\lambda0}^\theta$ has
unique eigenvalue in $(-\infty,0)$ at the point $(\lambda,0)\in
\cC_{10}$. Then by Theorem \ref{teo:constant} for any
$(\lambda,\mu)\in \cC_{10}$ the operator
$H^\theta_{\lambda\mu}(0),\,\theta\in\{\mathrm{os},\mathrm{oa}\}$
has unique eigenvalue  in $(-\infty,0)$.

\black
\subsection{The discrete spectrum of ${H}_{\lambda\mu}(K)$}

For every $n\ge1$ define
\begin{equation}\label{enK}
e_n(K;\lambda,\mu):= \sup\limits_{\phi_1,\ldots,\phi_{n-1}\in
L^{2,o}(\T^2)}\,\,\inf\limits_{{\psi}
\in[\phi_1,\ldots,\phi_{n-1}]^\perp,\,\|{\psi}\|=1}
({H}_{\lambda\mu}(K){\psi},{\psi})
\end{equation}
and
\begin{equation}\label{EnK}
E_n(K; \lambda,\mu):= \inf\limits_{\phi_1,\ldots,\phi_{n-1}\in
L^{2,o}(\T^2)}\,\,\sup\limits_{{\psi}
\in[\phi_1,\ldots,\phi_{n-1}]^\perp,\,\|{\psi}\|=1}
({H}_{\lambda\mu}(K){\psi},{\psi}).
\end{equation}
By the minimax principle, $e_n(K;\lambda,\mu)\le \cE_{\min}(K)$ and
$E_n(K;\lambda,\mu)\ge \cE_{\max}(K).$ Since, the rank of
$V_{\lambda\mu}$ does not exceed six, by choosing suitable elements
$\phi_1$, $\phi_2$, $\phi_3$, $\phi_4$, $\phi_5$, and $\phi_6$ from
the range of $V_{\lambda\mu}$  one concludes that
$e_n(K;\lambda,\mu) = \cE_{\min}(K)$ and $E_n(K;\lambda,\mu) =
\cE_{\max}(K)$ for all $n\ge7.$

\begin{lemma}\label{lem:monoton_xos_qiymat}
Let $n\ge1$ and $i\in\{1,2\}.$  For every fixed $K_j\in\T,$
$j\in\{1,2\}\setminus\{i\},$  the map
$$
K_i\in\T \mapsto \cE_{\min}((K_1,K_2)) - e_n((K_1,K_2);\lambda,\mu)
$$
is non-increasing in $(-\pi,0]$ and non-decreasing in $[0,\pi]$.
Similarly, for every fixed $K_j\in\T,$ $j\in\{1,2\}\setminus\{i\},$
the map
$$
K_i\in\T \mapsto E_n((K_1,K_2);\lambda,\mu) - \cE_{\max}((K_1,K_2))
$$
is non-increasing in $(-\pi,0]$ and non-decreasing in $[0,\pi]$.
\end{lemma}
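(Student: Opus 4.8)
The plan is to realize both of the monotone quantities as the $n$-th min--max value counted \emph{from the top} of a suitable $K$-dependent bounded self-adjoint operator, and then to exploit that this operator depends on $K_i$ only through the single scalar $\cos\tfrac{K_i}2$, monotonically. For a bounded self-adjoint operator $T$ on $L^{2,o}(\T^2)$ I introduce the notation
\[
\Lambda_n(T):=\inf_{\phi_1,\ldots,\phi_{n-1}\in L^{2,o}(\T^2)}\ \sup_{\psi\in[\phi_1,\ldots,\phi_{n-1}]^\perp,\,\|\psi\|=1}(T\psi,\psi).
\]
Replacing $H_{\lambda\mu}(K)$ by the reflected operator $\cE_{\min}(K)I-H_{\lambda\mu}(K)$ turns the sup--inf in \eqref{enK} into an inf--sup and shifts the value by the constant $\cE_{\min}(K)$, so that $\cE_{\min}(K)-e_n(K;\lambda,\mu)=\Lambda_n\bigl(A(K)\bigr)$ with $A(K):=\cE_{\min}(K)I-H_{\lambda\mu}(K)$. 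Directly from \eqref{EnK}, with no reflection but only a shift, $E_n(K;\lambda,\mu)-\cE_{\max}(K)=\Lambda_n\bigl(B(K)\bigr)$ with $B(K):=H_{\lambda\mu}(K)-\cE_{\max}(K)I$.

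Next I would use that $V_{\lambda\mu}$ does not depend on $K$, so the entire $K$-dependence of $A(K)$ and $B(K)$ resides in their multiplication parts. A short computation gives
\[
\cE_{\min}(K)-\cE_K(p)=-2\sum_{i=1}^2\cos\tfrac{K_i}2\,(1-\cos p_i),\qquad \cE_K(p)-\cE_{\max}(K)=-2\sum_{i=1}^2\cos\tfrac{K_i}2\,(1+\cos p_i),
\]
so that $A(K)$ is multiplication by the first function minus $V_{\lambda\mu}$, while $B(K)$ is multiplication by the second function plus $V_{\lambda\mu}$. In both cases the coefficient multiplying the nonnegative factor $1\mp\cos p_i$ equals $-2\cos\tfrac{K_i}2$.

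Finally I would invoke the elementary monotonicity of $c(K_i):=\cos\tfrac{K_i}2$: as a function of $K_i\in\T$ (with $K_j$, $j\ne i$, held fixed) it is even, equals $1$ at $K_i=0$, is non-decreasing on $[-\pi,0]$ and non-increasing on $[0,\pi]$. Since the coefficient $-2c(K_i)$ multiplies a nonnegative factor, the pointwise values of both multiplier functions above are non-increasing on $(-\pi,0]$ and non-decreasing on $[0,\pi]$ in $K_i$. Hence the corresponding multiplication operators increase, resp.\ decrease, in the operator order on these intervals, and because $V_{\lambda\mu}$ is fixed, $A(K)$ and $B(K)$ inherit exactly the same operator monotonicity. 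As $T\le T'$ implies $\Lambda_n(T)\le\Lambda_n(T')$ straight from the min--max formula, the values $\Lambda_n(A(K))$ and $\Lambda_n(B(K))$ are non-increasing on $(-\pi,0]$ and non-decreasing on $[0,\pi]$ in $K_i$, which is precisely the asserted behaviour of $\cE_{\min}(K)-e_n$ and $E_n-\cE_{\max}(K)$.

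The one point demanding care is the sign bookkeeping: one must ensure that the top min--max $\Lambda_n$ is used for \emph{both} quantities (it is the reflection $H\mapsto\cE_{\min}(K)I-H$ that converts the bottom min--max $e_n$ into a top min--max), and that the direction of operator monotonicity is correctly matched to the two intervals $(-\pi,0]$ and $[0,\pi]$. Beyond this, the argument requires no analytic estimate at all, precisely because the perturbation $V_{\lambda\mu}$ carries no $K$-dependence and the whole variation is governed by the scalar $\cos\tfrac{K_i}2$.
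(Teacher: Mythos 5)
Your proof is correct and follows essentially the same route as the paper: both arguments rest on the observation that the entire $K_i$-dependence of the shifted Hamiltonian lies in the scalar $\cos\tfrac{K_i}2$ multiplying the nonnegative functions $1\mp\cos p_i$, combined with the $K$-independence of $V_{\lambda\mu}$ and the monotonicity of min--max values. The paper phrases this as pointwise monotonicity in $K_1$ of the quadratic form $\bigl(({H}_0(K)-\cE_{\min}(K))\psi,\psi\bigr)$ carried through the sup--inf in \eqref{enK}, whereas you package the same mechanism as operator-order monotonicity of the reflected operators $\cE_{\min}(K)I-H_{\lambda\mu}(K)$ and $H_{\lambda\mu}(K)-\cE_{\max}(K)I$ fed into the top min--max $\Lambda_n$; this is a difference of bookkeeping, not of substance.
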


\begin{proof}
Without loss of generality we assume that $i=1.$ Given ${\psi}\in
L^{2,o}(\T^2)$ consider
$$
(({H}_0(K) - \cE_{\min}(K)){\psi},{\psi})=\int_{\T^2}
\sum\limits_{i=1}^2 \cos\tfrac{K_i}{2}\,\big(1-\cos
q_i\big)|\psi(q)|^2\,\mathrm{d} q, \quad K:=(K_1,K_2).
$$
Clearly, the map $K_1\in\T\mapsto (({H}_0(K) -
\cE_{\min}(K)){\psi},{\psi})$ is non-decreasing in $(-\pi,0]$ and is
non-increasing in $[0,\pi].$ Since ${V}_{\lambda\mu}$ is independent
of $K,$ by definition of $e_n(K;\lambda,\mu)$ the map
$K_1\in\T\mapsto e_n(K;\lambda,\mu) - \cE_{\min}(K)$ is
non-decreasing in $(-\pi,0]$ and is non-increasing  in $[0,\pi].$

The case of $K_i\mapsto E_n(K;\lambda,\mu) - \cE_{\max}(K)$ is
similar.
\end{proof}

\noindent{\textit{Proof of Theorem \ref{teo:discr_Kne0}.}} By Lemma
\ref{lem:monoton_xos_qiymat} for any $K\in\T^2$ and $m\ge1$ we have
\begin{equation}\label{eq:min_eigen0}
0\le \cE_{\min}(0) - e_m(0;\lambda,\mu) \le \cE_{\min}(K) -
e_m(K;\lambda,\mu)
\end{equation}
and
\begin{equation*}
E_m(K;\lambda,\mu) - \cE_{\max}(K) \ge E_m(0;\lambda,\mu) -
\cE_{\max}(0) \ge 0.
\end{equation*}
By the assumption, $ e_n(0;\lambda,\mu)$ is a discrete eigenvalue of
${H}_{\lambda\mu}(0)$ for some $\lambda,\mu\in\R.$ Thus,
$\cE_{\min}(0) - e_n(0;\lambda,\mu)>0,$ and hence, by
\eqref{eq:min_eigen0} and \eqref{eq:essential_spectrum}
$e_n(K;\lambda,\mu)$ is a discrete eigenvalue of
${H}_{\lambda\mu}(K)$ for any $K\in\T^2.$ Since
$e_1(K;\lambda,\mu)\le \ldots \le e_n(K;\lambda,\mu)<\cE_{\min}(K),$
it follows that ${H}_{\lambda\mu}(K)$ has at least $n$ eigenvalue
below its essential spectrum. The case of $E_n(K;\lambda,\mu)$ is
similar.\,\,\hfill $\qed$
\smallskip

\noindent{\textit{Proof of Theorem \ref{teo:disc_quasiK}}} can be
obtained by combining Theorem \ref{teo:constant} with Theorem
\ref{teo:disc_quasi0}.\,\hfill $\square$

\appendix
\section{}
\label{sec:append_A}

Let us first prove the corresponding asymptotical formula in
\eqref{asimp_function0} for $a(z)$ as $z\nearrow0$. We start with an
elementary observation that
\begin{equation}
\label{aasym} a(z) =\frac{1}{8\pi^2}\int\limits_
{\mathbb{T}^2}\frac{(\sin p_{1}+\sin p_2)^2\
\mathrm{d}p}{\cE_{0}(p)-z}=\frac{1}{4\pi^2}\int\limits_
{\mathbb{T}^2}\frac{\sin^2 p_{1}\ \mathrm{d}p}{\cE_{0}(p)-z},
\end{equation}
taking into account that the function
$\cE_{0}(p)=2\epsilon\bigl((p_1,p_2)\bigr)=2\sum\limits_{i=1}^2
\big(1-\cos p_i)$ is invariant under permutations of the components
$p_1$ and $p_2$. One also notices that, for $|A|>1$,
\begin{equation}\label{1sinus}
\int_{\T}\frac{\sin^2 t \mathrm{d} t}{A - \cos t} =
\begin{cases}
\pi\sqrt{A^2-1}-{\pi}A & \text{if $A<-1,$}\\
{\pi}A-\pi\sqrt{A^2-1} & \text{if $A>1$.}
\end{cases}
\end{equation}
Since $2-\frac{z}{2}-\cos q>1$ for  any  $q\in \T$ and $z<0$, by
using $\int_\T \cos q\,\,\mathrm{d} q=0$ and \eqref{1sinus}, one
then finds from \eqref{aasym} that
\begin{align}
\label{Ia} &\lim\limits_{z\nearrow
0}a(z)=1-\frac{1}{4\pi^2}\,\,I_{a},
\end{align}
where
$$
I_a={\pi}\int
\limits_{\mathbb{T}}\sqrt{(2-\cos\,q)^2-1}\,\,\mathrm{d}q.
$$
Performing in $I_a$ the change of variables $v=\sqrt{2\left(\tan
\frac{q}{2}\right)^2+1}$, one obtains
\begin{align}
\label{Ia1} I_a=&
{16\pi}\int\limits_{1}^{+\infty}\frac{v^2}{(v^2+1)^2}dv =
{16\pi}\left(\frac{\arctan
v}{2}-\frac{v}{2(v^2+1)}\right)\bigg|_1^{+\infty}=2\pi^2+4\pi.
\end{align}
Combining \eqref{Ia} and \eqref{Ia1} yields the required asymptotics
\begin{align}
\label{Ia3} &\lim\limits_{z\nearrow 0}a(z)=\frac{\pi-2}{2\pi},
\end{align}
thus, completing its proof.

The remaining asymptotical formulae in
\eqref{asimp_function0}--\eqref{asimp_functions4}  are derived
analogously and, thus, we skip the respective computation. We
remark, however, that it is somewhat reduced due to Remark
\ref{Rabcd}. \hfill $\square$ \vspace{5mm}

{\bf Acknowledgments}. The authors thank the anonymous referees for
important remarks and suggestions. The authors acknowledge support
of this research by Ministry of Innovative Development of the
Republic of Uzbekistan (Grant No. FZ–20200929224).



\begin{thebibliography}{99}

\bibitem{Albeverio:1988}
S. Albeverio, F. Gesztesy, R. Khoegh-Kron, and H. Holden: Solvable
Models in Quantum Mechanics, Springer, New York (1988).

\bibitem{ALzM:2004}
S. Albeverio, S. N. Lakaev, Z. I. Muminov:  Schr\"{o}dinger
operators on lattices. The Efimov effect and discrete spectrum
asymptotics,  Ann. Henri Poincar\'{e}. {{\bf 5}  (2004), 743-772}.

\bibitem{ALMM:2006}
{S.~Albeverio, S.~N.~Lakaev, K.~A.~Makarov, Z.~I.~Muminov}: {The
Threshold Effects for the Two-particle Hamiltonians on Lattices,}
Comm. Math. Phys.
\href{http://dx.doi.org/10.1007/s00220-005-1454-y}{{\bf 262} (2006),
91--115}.

\bibitem{ALKh:2012}
S. Albeverio, S. N. Lakaev, A. M. Khalkhujaev: Number of Eigenvalues
of the Three-Particle Schrodinger Operators on Lattices, Markov
Process. Relat. Fields. {\bf18} (2012), 387-420.

\bibitem{BPL:2017}
V. Bach, W.~de Siqueira Pedra, S.N. Lakaev: Bounds on the discrete
spectrum of lattice Schr{\"o}dinger operators. J. Math. Phys. {\bf
59}:2 (2017), 022109. \black

\bibitem{Bloch:2005}
{I. Bloch:} Ultracold quantum gases in optical lattices, Nat. Phys.
{\bf 1}  (2005), 23--30.

\bibitem{DzMSh:2011}
G. Dell'Antonio, Z.I. Muminov,  Y.M. Shermatova: On the number of
eigenvalues of a model operator related to a system of three
particles on lattices, J. Phys. A \textbf{44} (2011), 315302.

\bibitem{Efimov:1970}
V.N. Efimov,  Weakly-bound states of three resonantly-interacting particles, Yad. Fiz. \textbf{12}  (1970), 1080--1091 [Sov. J. Nucl. Phys.
\textbf{12}  (1970), 589--595].

\bibitem{FMerkuriev:1993}
L.D. Faddeev, S.P. Merkuriev, Quantum Scattering Theory for Several
Particle Systems (Doderecht: Kluwer Academic Publishers, 1993).

\bibitem{FIC:2002}
P.A. Faria Da Veiga, L. Ioriatti, M. O'Carroll:
\href{https://doi.org/10.1103/PhysRevE.66.016130}{Energy-momentum
spectrum of some two-particle lattice Schr\"odinger Hamiltonians}.
Phys. Rev. E  {\bf 66} (2002), 016130.

\bibitem{FassariKlaus:1998}
S. Fassari and M. Klaus: Coupling constant thresholds of perturbed
periodic Hamiltonians, J. Math. Phys. {\bf 39} (1998), 4369--4416.

\bibitem{Gridnev:2014}
D.K. Gridnev:
\href{https://doi.org/10.1088/1751-8113/47/50/505204}{Three
resonating fermions in flatland: Proof of the super Efimov effect
and the exact discrete spectrum asymptotics}, J. Phys. A: Math.
Theor. \textbf{47}:50  (2014), 505204.

\bibitem{HzMK:2020}
F. Hiroshima, Z. Muminov, U. Kuljanov:
\href{https://doi.org/10.1080/03081087.2020.1750547}{Threshold of
discrete Schr\"odinger operators with delta-potentials on
$N$-dimensional lattice}, Lin. Multilin. Algebra \textbf{70}:5
(2022), 919--954.

\bibitem{J-Z:1998}
D. Jaksch, C. Bruder, J. Cirac, C.W. Gardiner, P. Zoller:
\href{https://doi.org/10.1103/PhysRevLett.81.3108}{Cold bosonic
atoms in optical lattices}, Phys. Rev. Lett. {\bf 81} (1998),
3108--3111.

\bibitem{JZoller:2005}
D. Jaksch, P. Zoller:
\href{https://doi.org/10.1016/j.aop.2004.09.010}{The cold atom
Hubbard toolbox}, Ann. Phys. {\bf315} (2005), 52--79.

\bibitem{KhLA:2021} Sh.Yu. Kholmatov, S.N. Lakaev, F. Almuratov:
\href{https://doi.org/10.1016/j.jmaa.2021.125280} {Bound states of
discrete Schr\"odinger operators on one and two dimensional
lattices}, J. Math. Anal. Appl. \textbf{503}:1 (2021), 125280.

\bibitem{Lakaev:1993}  {S.N. Lakaev}: {The Efimov's effect
of the three identical quantum particle on a lattice}, Funct. Anal.
Appl. {\bf 27} (1993), 15--28

\bibitem{Lakaev:1989}
{S.N.Lakaev:} Some spectral properties of the generalized Friedrichs
model, Journal of Soviet Mathematics, {\bf 45}:6 (1989),
1540--1563.

\bibitem{LAbdukhakimov:2020}
S.N. Lakaev, S.Kh. Abdukhakimov: Threshold effects in a two-fermion
system on an optical lattice, Theor. Math. Phys. {\bf 203}:2
(2020), 251--268

\bibitem{LBozorov:2009}
S.N. Lakaev, I.N. Bozorov:
\href{https://doi.org/10.1007/s11232-009-0030-6}{The number of bound
states of a one-particle Hamiltonian on a three-dimensional
lattice}, Theor. Math. Phys. {\bf 158} (2009), 360--376.

\bibitem{LDKh:2016}
S. N. Lakaev, G. Dell'Antonio, A.Khalkhuzhaev: Existence of an
isolated band in a system of three particles in an optical lattice,
J. Phys. A: Math. Theor. {\bf 49} (2016), 52 [15 pages].

\bibitem{LKhKh:2021} S.N.Lakaev, Sh.Yu.Kholmatov, Sh.I.Khamidov:
\href{https://iopscience.iop.org/article/10.1088/1751-8121/abfcf4}
{Bose-Hubbard model with on-site and nearest-neighbor interactions;
exactly solvable case}, J. Phys. A: Math. Theor. \textbf{54} (2021),
245201 [22 pages].

\bibitem{LLakaev:2017}
S.N.Lakaev, Sh.S. Lakaev:
The existence of bound states in a system of three particles in an
optical lattice, J. Phys. A: Math. Theor. {\bf 50} (2017) 335202 [17
pages].

\bibitem{LOzdemir:2016}
S.N. Lakaev, E. \"Ozdemir:
\href{https://doi.org/10.15672/HJMS.20164515685} {The existence and
location of eigenvalues of the one particle Hamiltonians on
lattices}, Hacettepe J. Math. Stat. {\bf 45} (2016), 1693--1703.

\bibitem{LSAhufinger:2012}
M. Lewenstein, A. Sanpera, V. Ahufinger:
\href{https://doi.org/10.1080/00107514.2013.800135} {Ultracold Atoms
in Optical Lattices: Simulating Quantum Many-body Systems.} Oxford
University Press, Oxford, 2012.

\bibitem{LSchwinger:1950}
B. A. Lippmann, J. Schwinger: Variational principles for scattering
processes. I, Phys. Rev. {\bf 79}, 469 (1950), 361--379.

\bibitem{Mattis:1986}
D. Mattis: \href{https://doi.org/1103/RevModPhys.58.361}{The
few-body problem on a lattice}, Rev. Mod. Phys. {\bf 58} (1986),
361--379.

\bibitem{Mogilner:1991}
A. Mogilner: Hamiltonians in solid-state physics as multiparticle
discrete Schr\"odinger operators: Problems and results, Advances in
Societ Math. {\bf 5} (1991), 139--194.

\bibitem{MSBelyaev:2001}
A.\,K.\,Motovilov, W.\,Sandhas, and V.\,B.\,Belyaev:
\href{http://dx.doi.org/10.1063/1.1371264}{Perturbation of a lattice
spectral band by a nearby resonance}, J. Math. Phys. \textbf{42}
(2001), 2490--2506.

\bibitem{Nishida:2017}
Y. Nishida:
\href{https://doi.org/10.1103/physrevlett.118.230601}{Semisuper
Efimov effect of two-dimensional bosons at a three-body resonance},
Phys. Rev. Lett. \textbf{118} (2017), 230601.

\bibitem{Nishida:2013}
Y. Nishida, S. Moroz, and D.T. Son:
\href{https://doi.org/10.1103/PhysRevLett.110.235301}{Super Efimov
effect of resonantly interacting fermions in two dimensions}, Phys.
Rev. Lett. \textbf{110} (2013), 235301.

\bibitem{Nishida:2010}
Y. Nishida and D.T. Son:
\href{https://doi.org/10.1103/physreva.82.043606}{Universal
four-component fermi gas in one dimension}, Phys. Rev. A
\textbf{82}:4 (2010), 043606.

\bibitem{OOBongs:2006}
C. Ospelkaus, S. Ospelkaus, L. Humbert, P. Ernst, K. Sengstock, K.
Bongs:
\href{https://doi.org/10.1103/PHYSREVLETT.97.120402}{Ultracold
heteronuclear molecules in a 3d optical lattice}, Phys. Rev. Lett.
{\bf 97} (2006), 120402.

\bibitem{Ovchinnikov:1979}
Y.N. Ovchinnikov, I.N. Sigal:
\href{https://doi.org/10.1016/0003-4916(79)90339-7}{Number of bound
states of three-body systems and Efimov's effect}, Ann. Phys.
\textbf{123}:2 (1979), 274--295.

\bibitem{Simon:1976}
B. Simon: \href{https://doi.org/10.1016/0003-4916(76)90038-5}{The
bound state of weakly coupled Schr\"odinger operators in one and two
dimensions}. Ann. Phys. {\bf 97} (1976), 279--288.

\bibitem{Sobolev:1993}
A.V. Sobolev: \href{https://doi.org/10.1007/bf02096734}{The Efimov
effect. Discrete spectrum asymptotics}, Commun. Math. Phys.
\textbf{156}:1 (1993), 101--126.

\bibitem{Tamura:2019}
H. Tamura:
\href{https://doi.org/10.1142/s0129055x19500314}{Asymptotic
distribution of negative eigenvalues for three-body systems in two
dimensions: Efimov effect in the antisymmetric space}, Rev. Math.
Phys. \textbf{31}:9  (2019), 1950031.

\bibitem{Tamura:1991}
H. Tamura: \href{https://doi.org/10.1016/0022-1236(91)90038-7}{The
Efimov effect of three-body Schr\"odinger operators}. J. Funct. Anal.
\textbf{95}:2 (1991), 433--459.

\bibitem{W-Z:2006}
K. Winkler, G. Thalhammer, F. Lang, R. Grimm, J. Hecker Denschlag,
A.J. Daley, A. Kantian, H.P. B\"uchler, P. Zoller:
\href{https://doi.org/10.1038/nature04918}{Repulsively bound atom
pairs in an optical lattice}, Nature {\bf 441} (2006), 853--856.

\bibitem{Yafaev:1974}
D.R. Yafaev: {On the theory of the discrete spectrum of the
three-particle Schr\"odinger operator}, Mat. Sb. \textbf{94(136)} (1974),
567--593, 655--656.


\end{thebibliography}
\end{document}